\newtheorem{branchingrule}{Branching Rule}
\newcommand{\N}{\mathbb{N}}
\newcommand{\problemfont}[1]{\textsc{#1}}
\newcommand{\scs}{\problemfont{Stable~Cutset}\xspace}
\newcommand{\set}[1]{\{#1\}}
\newcommand{\NPC}{$\mathcal{NP}$-complete\xspace}
\newcommand{\BigO}[1]{\mathcal{O}(#1)}
\newcommand{\OStar}[1]{\mathcal{O}^*(#1)}
\newcounter{case}
\newenvironment{case}[1][]{%
	\refstepcounter{case}%
	\medskip\noindent\textbf{Case~\thecase.} \emph{#1}\par\nopagebreak
}{\par}
\newcounter{subcase}[case]
\renewcommand{\thesubcase}{\thecase.\arabic{subcase}}
\newenvironment{subcase}[1][]{%
	\refstepcounter{subcase}%
	\medskip\noindent\textbf{Case~\thesubcase.} \emph{#1}\par\nopagebreak
}{\par}
\newcounter{subsubcase}[case]
\renewcommand{\thesubsubcase}{\thesubcase.\arabic{subsubcase}}
\newenvironment{subsubcase}[1][]{%
	\refstepcounter{subsubcase}%
	\medskip\noindent\textbf{Case~\thesubsubcase.} \emph{#1}\par\nopagebreak
}{\par}
\newtcolorbox{problembox}[2][]{colback=white!90!black,
	colframe=black, 
	coltitle=black,
	colbacktitle=white!80!black,enhanced,
	boxrule=0.8pt,
	attach boxed title to top left={yshift=-2mm, xshift=3mm},
	boxed title style={boxrule=0.8pt},
	title={#2},#1}
\title{On Stable Cutsets in General and Minimum Degree Constrained Graphs}
\author{Mats Vroon}{Utrecht University, Department of Information and Computing Sciences}{m.vroon@students.uu.nl}{}{}
\author{Hans L. Bodlaender}{Utrecht University, The Netherlands}{H.L.Bodlaender@uu.nl}{https://orcid.org/0000-0002-9297-3330}{}
\authorrunning{Mats Vroon and Hans L. Bodlaender}
\keywords{Stable cutsets; Independent Sets; Exact Algorithms; Kernelisation; Minimum Degree} 
\date{ \today}
\begin{document}
	\maketitle
	\begin{abstract}
	A \emph{stable cutset} is a set of vertices $S$ of a connected graph, that is pairwise non-adjacent and when deleting $S$, the graph becomes disconnected. Determining the existence of a stable cutset in a graph is known to be \NPC. In this paper, we introduce a new exact algorithm for \scs. By branching on graph configurations and using the $\OStar{1.3645}$ algorithm for the \problemfont{(3,2)-Constraint Satisfaction Problem} presented by Beigel and Eppstein, we achieve an improved running time of $\OStar{1.2972^n}$. 	
    
	In addition, we investigate the \scs problem for graphs with a bound on the minimum degree $\delta$. First, we show that if the minimum degree of a graph $G$ is at least $\frac{2}{3}(n-1)$, then $G$ does not contain a stable cutset. Furthermore, we provide a polynomial-time algorithm for graphs where $\delta \geq \tfrac{1}{2}n$, and a similar kernelisation algorithm for graphs where $\delta = \tfrac{1}{2}n - k$.
	Finally, we prove that \scs remains \NPC for graphs with minimum degree $c$, where $c > 1$. We design an exact algorithm for this problem that runs in $\OStar{\lambda^n}$ time, where $\lambda$ is the positive root of $x^{\delta + 2} - x^{\delta + 1} + 6$. This algorithm can also be applied to the \problemfont{3-Colouring} problem with the same minimum degree constraint, leading to an improved exact algorithm as well.
\end{abstract}
	\section{Introduction}
A \textit{stable cutset} in a connected graph $G = (V, E)$ is a subset of vertices $S \subseteq V$ that is pairwise non-adjacent ($S$ is a \textit{stable set}) and when deleting these vertices from $G$, the graph becomes disconnected ($S$ is a cutset). This gives the following decision problem:

\begin{problembox}{\scs}
	\begin{tabbing}
		\hspace{1.5cm} \= \kill  
		\textit{Instance:} \> A connected graph $G$. \\
		\textit{Question:} \> Does $G$ have a stable cutset?
	\end{tabbing}
\end{problembox}

Applications of \scs were already shown in connection with perfect graphs \cite{StableSetBondingInPerfectGraphsAndParityGraphs, ColoringGraphsWithStableCutsets}. Tucker~\cite{ColoringGraphsWithStableCutsets} proved that $G$ is $k$-colourable if and only if all
$G_i \cup S$ are $k$-colourable, for all components of $G\setminus S$ for a stable cutset $S$ of $G$ with no vertex of $S$ lying on a hole.

\scs has been proven \NPC first by Chv\'{a}tal \cite{RecognizingDecomposableGraphs}.
The \scs problem is related to the problem to find a \textit{matching cut}. A \textit{matching cut} in a graph $G=(V,E)$ is a set $M\subseteq E$ of edges that form a matching such that $G$ becomes disconnected by deleting $M$ from it.
It has been shown that determining whether a graph with maximum degree 4 admits a \textit{matching cut} 
is \NPC. 
Clearly, a graph with minimum degree 2 has a matching cut if and only if the \textit{line graph} $L(G)$ has a stable cutset. A line graph $L(G)$ is constructed by creating a vertex for every edge in $G$, and these vertices are adjacent if and only if the corresponding edges in $G$ share a common vertex \cite{LineGraphsAndLineDigraphs}. Now, consider that we have a stable cutset $S$ in $L(G)$. The edges of $G$ that correspond to the vertices in $S$, denoted as $S'$, will not have a vertex in common, and will separate the graph $G$, hence $S'$ is a matching cut of $G$. 
Further investigation was done by
Brandst\"{a}dt et al. \cite{OnStableCutsetsInGraphs}, who showed that \scs remains \NPC even when restricted to $K_4$-free graphs. Also, \scs is \NPC in line or planar graphs with maximum degree 5 \cite{OnStableCutsetsInLineGraphs, OnStableCutsetsInClawFreeGraphsAndPlanarGraphs}. 

Despite the $\mathcal{NP}$-completeness of \scs, there are known graph classes where `fast' solutions are possible. Trivially, $K_3$-free graphs and graphs with connectivity number 1, have a stable cutset. In a $K_3$-free graph every vertex has stable neighbours which form a stable cutset,
and a cut vertex in a graph of connectivity 1 is a stable cutset on its own.
Chen and Yu~\cite{ANoteOnFragileGraphs} proved that graphs with at most $2n - 4$ edges contain a stable cutset. This bound is 
tight as there exists a graph without a stable cutset with $2n - 3$ edges~\cite{ExtremalGraphsHavingNoStableCutsets}. But, \scs can be solved for graphs with $2n - 3$ edges in polynomial-time~\cite{ExtremalGraphsHavingNoStableCutsets}. Furthermore, polynomial-time algorithms are constructed for \textit{HHD}-free graphs, brittle graphs and hole-free graphs~\cite{OnStableCutsetsInGraphs}. Also, polynomial-time reduction algorithms are constructed for graphs with maximum degree 3, and line graphs with maximum degree 4~\cite{OnStableCutsetsInLineGraphs}. Both algorithms reduce the input graph to such a small graph that solving becomes trivial, or it is reduced to a different graph class where a known algorithm applies. Le~et~al.~\cite{OnStableCutsetsInClawFreeGraphsAndPlanarGraphs} showed that there exists a polynomial-time algorithm for claw-free graphs with maximum degree 4. This combined with a reduction rule created a polynomial-time algorithm for (claw, $K_4$)-free graphs. They also show that there exists a polynomial-time algorithm on claw-free planar graphs. 

A small amount of work have been done on the parameterized complexity of \scs. 
In 2013, Marx et al.~\cite{FindingSmallSeparatorsInLinearTimeViaTreewidthReduction} showed that
\scs, parameterized by solution size, is \textit{fixed-parameter tractable}. Recently, Rauch et al. have proven that \scs is FPT when parameterized by dual of maximum degree, size of a dominating set, distance to $P_5$-free graphs, and more \cite{ExactParameterizedAlgorithmsForIndCutsetProblem}. Furthermore, Kratsch and Le \cite{OnPolyKernelizationStableCutset} have shown that there exists \textit{polynomial kernelisations} when parameterized by modulators to a single clique, to a cluster graph, and when parameterized on a twin cover. Also, they show for a few parameters that \textit{polynomial kernelisation} is not possible.

Exact algorithms for this problem have not yet been extensively studied. Only Rauch et al. \cite{ExactParameterizedAlgorithmsForIndCutsetProblem} have presented an exact algorithm for \scs.
They prove that every stable set $S'$, which is a superset of a stable cutset $S$ in $G$, is also a stable cutset in $G$. With that knowledge, enumerating all the maximal stable sets and checking for separation would suffice to check whether the graph has a stable cutset. Additionally, they show how this maximal stable cutset can be reduced to a minimal stable cutset.
This gives an algorithm with a running time of $\OStar{3^{n/3}}$.

Kratsch and Le \cite{AlgorithmsSolvingTheMatchingCutProblem} gave the first exact branching algorithm for the closely related \problemfont{Matching Cutset} problem, with a time complexity of $\OStar{2^{n/2}} = \OStar{1.4143^n}$. This was improved in 2020 by Komusiewicz et al. \cite{MatchingCutKernelizationSingleExponentialTimeFPTAndExactExponentialAlgorithm}. By reducing \problemfont{Matching Cut} to \problemfont{3-SAT},
they showed that \problemfont{Matching Cutset} can be solved in $\OStar{1.3280^n}$ time. Using a
branching algorithm and not relying on \problemfont{SAT}, an algorithm with $\OStar{1.3803^n}$ time was obtained.

Another problem that is similar to \scs is \problemfont{3-Colouring}: in both of these problems, we assign to each vertex one colour out of a set of size three. The constraints do differ slightly. 
In \problemfont{3-Colouring}, adjacent vertices cannot have the same colour. However, in \scs, we have colour classes $A$, $B$ and $S$, were vertices in $S$ cannot be adjacent, vertices
in $A$ cannot be adjacent to vertices in $B$,
but there can be edges between vertices that are both in $A$ or both in $B$.
The current fastest exact algorithm for \problemfont{3-Colouring} is found by Meijer \cite{3ColoringInTime1.3217^n} with time complexity of $\OStar{1.3217^n}$. This algorithm is based on the algorithm presented by Beigel and Eppstein \cite{3ColoringInTime1.3289^n} for the \problemfont{3-Colouring} problem. They presented an $\OStar{1.3645^n}$ algorithm to solve the \problemfont{(3,2)-Constraint Satisfaction Problem}, to which \problemfont{3-Colouring} can be modelled. 
By enumerating all possible colourings for a small set of vertices and using a (3,2)-CSP reduction rule, an exact algorithm with a time complexity of $\OStar{1.3289^n}$ is achieved.

Unlike the research for \problemfont{Matching Cutset} and \problemfont{3-Colouring}, the literature has not yet explored \scs problems when restricted to graphs with a certain minimum degree. 
For example, Chen et al.~\cite{MatchingCutInGraphsWithLargeMinimumDegree} presented an algorithm that solves the \problemfont{matching cutset} problem for graphs with minimum degree $\delta \geq 3$ in $\OStar{\lambda^n}$, where $\lambda$ is the positive root of the polynomial $x^{\delta+1} - x^\delta - 1$. Furthermore, for \problemfont{3-Colouring}, an algorithm based on dominating sets is presented that has a time complexity of $\mathcal{O}(1.2957^n)$ for $\delta \geq 15$ and $\mathcal{O}(1.1^n)$ for $\delta \geq 50$ \cite{3ColoringWithMinDegC}.

This paper investigates \scs in both general graphs and graphs with minimum degree constraints. We begin in Section~\ref{sec:preliminaries} by introducing the necessary definitions and notations. An annotated graph model for \scs is developed in Section~\ref{sec:modelingSCS}, where, in addition, we demonstrate how this model can be reduced to a \problemfont{(3,2)-CSP} instance. This model is used in all subsequent sections of the paper.
The first algorithmic contribution is an improved exact algorithm for general graphs which is discussed in Section~\ref{sec:generalExactAlg}. Then we shift to graphs with minimum degree constraints. Section~\ref{sec:minDegCN} presents different results for \scs in graphs with minimum degree $\delta = c \cdot n$, where $c < 1$. In Section~\ref{sec:minDegC} we show a $\mathcal{NP}$-completeness proof and an exact algorithm for \scs in graphs with minimum degree $c$, where $c > 1$. This exact algorithm is with minor adaptations also applied to \problemfont{3-Colouring} in graphs with the same minimum degree condition in Section~\ref{sec:3ColMinDegC}. We conclude in Section~\ref{sec:conclusion} with a summary of our findings.

	\section{Preliminaries}\label{sec:preliminaries}
Unless stated otherwise, we consider undirected simple graphs $G = (V,E)$. The number of vertices is denoted as  $n = |V|$ and the number of edges as $m = |E|$. The \textit{(open) neighbourhood} of a vertex $v$, is the set of vertices which are the neighbours of $v$. This is denoted by $N(v)$. The \textit{closed neighbourhood} is defined as $N[v] = N(v) \cup \set{v}$. Let $U$ be a subset of the vertices $V$. We define the neighbourhood of a set as $N(U) = \bigcup_{v \in U} N(v) \setminus U$. The graph induced by the vertices of $U$ is denoted as $G[U]$. We write $G-U$ for the graph $G[V\setminus U]$. The vertex set $U$ is a \emph{clique} if every pair of vertices in $U$ is adjacent. A \emph{triangle} is a clique of size three.  A \textit{cutset} is defined as a subset of vertices $S \subseteq V$ such that the graph $G-S$ is disconnected. If a subset of vertices are pairwise non-adjacent then the set is called \textit{stable}. A \textit{stable cutset} is a subset of vertices that is \textit{stable} and a \textit{cutset}.

The \problemfont{Constraint Satisfaction Problem} (CSP) is the problem whether it is possible to assign values to variables such that all the constraints are satisfied. Each variable has a \textit{domain} of values that can be assigned to the variable. In the \problemfont{(3,2)-CSP} problem, every variable has a domain of size at most three, and each constraint involves at most two variables. Thus, the constraints specify which combinations of values are not allowed for certain pairs of variables.

A problem is \textit{fixed-parameter tractable} (FPT) when there is an algorithm that decides
the problem in 
$f(k) \cdot |(x,k)|^c$ time, for each instance $x$ and parameter $k$, where $c$ is a constant independent of $x$ and $k$, and $f$ is a computable function~\cite{ParameterizedAlgorithms}.
A \textit{kernelisation algorithm} is an algorithm that given an instance $(x, k)$, works in polynomial time and returns an equivalent instance $(x', k')$, with
$|x'|$ and $k'$ bounded by computable functions of $k$~\cite{ParameterizedAlgorithms}. It is a \textit{polynomial kernelisation} if the \textit{size} of the reduced instance $|x'|$ has a polynomial upper bound in $k$. 
The notation $\mathcal{O}^*$, is a modified big-O notation where polynomial bounded factors are left out. More formal, $f(n) = \BigO{g(n)\cdot poly(n)} = \OStar{g(n)}$ \cite{ExactExponentialAlgorithmsBook}. 
	\section{Modelling the \scs problem}\label{sec:modelingSCS}
In this section, we describe the manner
we model the stable cutset problem as an annotation problem, as we will use throughout this paper. We then present several properties and rules that will be applied significantly in the subsequent sections. After that, we show how to transform a stable cutset instance to a (3,2)-CSP instance, leading to our first result. Finally, we discuss a lemma proven by Beigel and Eppstein \cite{3ColoringInTime1.3289^n} and how this is helpful in reducing the instance size.

\subsection{\scs as an annotation problem}\label{sec:scsAsAnnot}

Annotated graphs are types of graphs that add information to their vertices or edges. For example, in \problemfont{Colouring} problems, an annotation function $c: V \rightarrow C$ assigns a color from a set $C$ to each vertex \cite{GraphTheoryBook}. Similarly, the \problemfont{Matching Cutset} problem can be modelled as an annotated graph, where vertices are labelled to indicate their belonging to one of the two components \cite{MatchingCutKernelizationSingleExponentialTimeFPTAndExactExponentialAlgorithm, MatchingCutInGraphsWithLargeMinimumDegree}. 

The \scs problem can be represented as an annotated graph, where each vertex is assigned with one of the three possible labels $L = \set{A, B, S}$. The vertices of the two connected components that need to be separated are labelled with $A$ and $B$, and the vertices of the stable cutset are labelled with $S$. An annotated graph $G$ contains a stable cutset if it satisfies the following adjacency and non-empty constraints:
\begin{itemize}
	\item No pair of adjacent vertices is labelled with $A$ and $B$, or both with $S$.
	\item Each label must appear on at least one vertex. 
\end{itemize}

In this paper, we use an annotated graph that keeps even more information. Per vertex we keep track of the  set of labels that `still are possible', $p: V \rightarrow \mathcal{P}(L)$. Before running some algorithm, all the vertices will be annotated with the full set $L$. An algorithm on this annotated graph is done if all the vertices have one `possible' label left, and the resulting labelling satisfies the above described adjacency and non-empty constraints. An example representation of the annotated graph can be seen in Figure \ref{fig:annotatedgraphexample}. Using this representation, we can already find some general properties for solving \scs.  

\renewcommand{\labelenumi}{(\roman{enumi})}
\begin{lemma}\label{lem:annGraphRules}
	Consider the graph $G = (V,E)$ with annotation function $p: V \rightarrow \mathcal{P}(L)$. Given some edge $(u, v) \in E$: 
	\begin{enumerate}
		\item If $p(u) = \set{A}$ then $B \notin p(v)$. 
		\item If $p(u) = \set{B}$ then $A \notin p(v)$. 
		\item If $p(u) = \set{S}$ then $S \notin p(v)$. 
	\end{enumerate}
\end{lemma}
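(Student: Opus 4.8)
The plan is to observe that each item is an immediate consequence of the adjacency constraints defining when an annotated graph contains a stable cutset, combined with the interpretation of the annotation function $p$: in any valid completion of the annotation, the label eventually assigned to a vertex $w$ must lie in $p(w)$. So I would start by fixing an arbitrary valid labelling $\ell : V \to L$ compatible with $p$ (i.e.\ $\ell(w) \in p(w)$ for all $w$) and satisfying the two bullet constraints; the statements then say that certain labels can never be used at $v$, hence may be discarded from $p(v)$ without changing the set of valid completions.

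For item~(i), if $p(u) = \set{A}$ then necessarily $\ell(u) = A$, since $A$ is the only label available at $u$. Because $(u,v) \in E$, the first adjacency constraint (no edge has one endpoint labelled $A$ and the other labelled $B$) forces $\ell(v) \neq B$. As $\ell$ was an arbitrary valid completion, no valid completion assigns $B$ to $v$, so $B \notin p(v)$ in the reduced (equivalent) instance. Item~(ii) is the mirror image, swapping the roles of $A$ and $B$ and invoking the same adjacency constraint. Item~(iii) uses instead the second part of the first bullet: if $p(u) = \set{S}$ then $\ell(u) = S$, and since adjacent vertices cannot both be labelled $S$, we get $\ell(v) \neq S$, hence $S \notin p(v)$.

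I do not anticipate any real obstacle here — the only thing to be careful about is stating precisely in what sense the conclusion "$B \notin p(v)$" is meant, namely that removing $B$ from $p(v)$ yields an equivalent annotated instance (one with the same set of valid completions). Once that convention is made explicit (as in the surrounding text, where an algorithm manipulates $p$ while preserving solvability), the proof is a one-line case check for each of the three symmetric items.
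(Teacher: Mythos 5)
Your proposal is correct and follows essentially the same route as the paper, whose proof is simply a direct appeal to the adjacency constraints in the definition of a stable cutset; you merely make the underlying semantics (``$B \notin p(v)$'' means no valid completion assigns $B$ to $v$) explicit, which the paper leaves implicit. No gap.
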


\begin{proof}
	The properties above are based on the definition of the stable cutset problem. Since the goal is to separate at least two connected components, the vertices labelled with $\set{A}$ can not be adjacent to vertices labelled with $\set{B}$, and vice versa. Additionally, to ensure that the cutset is stable, vertices labelled with $\set{S}$ can not be adjacent to other vertices labelled with $\set{S}$.
\end{proof}

By using some algorithm and the above rules, we can get to a state where there exists some vertex $v$ in $G$ such that $p(v) = \emptyset$. This could occur for example when $v$ has three different neighbours $x, y, z$ that are labelled with $\set{A}$, $\set{B}$ and $\set{S}$ respectively. While the unannotated graph may still admit a stable cutset, we know that the current annotated graph does not contain one. This formalised in a lemma gives:

\begin{lemma}
	Consider the graph $G = (V,E)$ with annotation function $p: V \rightarrow \mathcal{P}(L)$. If there exists some vertex $v \in V$ with $p(v) = \emptyset$, then there does not exist a stable cutset in the annotated graph.
\end{lemma}

\begin{figure}[h]
	\centering
	\includegraphics[width=0.5\linewidth]{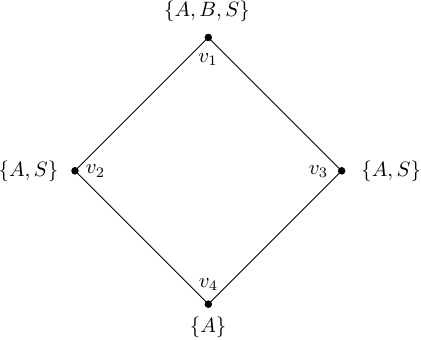}
	\caption{Example of some annotated graph $G$. Note, that $p(v_4) = \set{A}$ and therefore $B \notin p(v_2)$ and $B \notin p(v_3)$ by Lemma \ref{lem:annGraphRules}.}
	\label{fig:annotatedgraphexample}
\end{figure}

\subsection{\scs as \problemfont{(3,2)-CSP}}\label{sec:annTo32CSP}
In the context of research on the \problemfont{3-Colouring} problem, Beigel and Eppstein \cite{3ColoringInTime1.3289^n} showed that \problemfont{(3,2)-CSP} can be solved in $\OStar{1.3645^n}$, where $n$ is the amount of variables. Below, we will show how our instance for stable cutset can be transformed into a (3,2)-CSP instance. Because the amount of vertices will be the same as the amount of variables in the corresponding (3,2)-CSP, it follows that \scs can be solved in $\OStar{1.3645^n}$

Let us consider the annotated graph $G = (V,E, p)$ as described in Section~\ref{sec:scsAsAnnot}. To construct the (3,2)\nobreakdash-CSP instance, we create a variable for every vertex $v$, where the domain for that variable is $p(v)$. This can be done because the semantics of the annotation of a vertex in $G$ are the same as the domain of a variable in (3,2)-CSP. If we look at the rules of Lemma~\ref{lem:annGraphRules}, we can see that they correspond to a constraint with two variables. So, for every edge $(u,v) \in E$, we add $\set{((u, A), (v, B)), ((u, B), (v, A)), ((u, S), (v, S))}$ to the constraint set of the (3,2)-CSP instance. An example of this instance transformation can be seen in Figure \ref{fig:32cspgraphexample}. 

Before we send our instance to the (3,2)-CSP solver of Beigel and Eppstein, we need to ensure that the non-emptiness constraints described in Section~\ref{sec:scsAsAnnot} are satisfied. Otherwise, we could simply label all the vertices with $A$, and no constraints of the (3,2)-CSP instance are violated. 
To prevent this, we pick a random vertex $u$ in the graph and label it with $p(u) = \set{A}$. We then run the algorithm on all vertices $v \in V \setminus \set{u}$ and prematurely label $p(v) = \set{B}$. If the algorithm succeeds in one of the $O(n)$ instances, we are done and know that the graph has a stable cutset. If this is not the case, then $u$ may be part of the stable cutset. In that case, we label $u$ with $p(u) = \set{S}$ and pick a vertex $x \in N(u)$. We label this vertex with $p(x) = \set{A}$. Now we repeat the process by running the algorithm for all vertices $v \in V \setminus \set{u, x}$ and prematurely labelling $p(v) = \
\set{B}$. If the algorithm fails on all these new $O(n)$ instances, then the graph does not contain a stable cutset. Otherwise, the graph does contain one. This gives our first algorithm for \scs; in later sections, we show how to improve on this bound by using more advanced techniques.


\begin{theorem}
	There is a $\OStar{1.3645^n}$ algorithm
    for \scs.
\end{theorem}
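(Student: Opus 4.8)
The plan is to turn the annotated-graph formulation into polynomially many instances of \problemfont{(3,2)-CSP} and then call the $\OStar{1.3645^n}$ solver of Beigel and Eppstein. Concretely, I would start from $G=(V,E,p)$ with every vertex annotated with the full label set $L=\set{A,B,S}$, create one variable per vertex whose domain is $p(v)$, and for each edge $(u,v)\in E$ forbid exactly the pairs $(A,B)$, $(B,A)$ and $(S,S)$ --- these are precisely the implications of Lemma~\ref{lem:annGraphRules}. A satisfying assignment of this CSP is the same object as a labelling of $V$ that violates none of the adjacency constraints of the stable cutset model, and conversely; since the number of variables equals $n$, the solver runs in $\OStar{1.3645^n}$ on each such instance.

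The one point needing care is the non-emptiness requirement: the bare CSP is satisfied by colouring everything $A$, so I must additionally force all three classes to be used and make sure the $S$-class really disconnects $G$. I would handle this by guessing, as already outlined before the statement. Fix an arbitrary vertex $u$, set $p(u)=\set{A}$, and for each of the $\BigO{n}$ choices of $v\in V\setminus\set{u}$ set $p(v)=\set{B}$ and run the solver. If one of these succeeds, the resulting assignment has non-empty $A$- and $B$-classes, its $S$-class is automatically non-empty (if $S=\emptyset$, then since $G$ is connected and both $A$ and $B$ are non-empty some edge would join $A$ and $B$, a contradiction), and the $S$-class separates $u$ from $v$, because a $u$--$v$ path in $G-S$ would contain an edge switching from label $A$ to label $B$. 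If all $\BigO{n}$ of these instances fail, then no stable cutset avoids $u$, so in every stable cutset $u\in S$; I then set $p(u)=\set{S}$, pick any $x\in N(u)$, set $p(x)=\set{A}$, and again run the solver over the $\BigO{n}$ choices $p(v)=\set{B}$ for $v\in V\setminus\set{u,x}$. The algorithm answers ``yes'' exactly when some instance in this scheme is satisfiable.

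The remaining work is to verify correctness of the case split and to bound the running time. For the forward direction I would take a stable cutset $S$ with components $C_1,\dots,C_k$ of $G-S$ where $k\ge 2$: if $u\notin S$ then $u$ lies in some $C_i$, and labelling all of $C_i$ with $A$, every other component with $B$, and $S$ with $S$ gives a satisfying assignment agreeing with $p(u)=\set{A}$ and with $p(v)=\set{B}$ for any $v$ outside $C_i$; if instead all $u\to A$ instances failed, then $u\in S$, every neighbour $x$ of $u$ lies outside $S$ (as $S$ is stable) hence in some $C_i$, and the same component colouring works for $p(u)=\set{S}$, $p(x)=\set{A}$. The backward direction is the separation argument of the previous paragraph applied to the two (or three) forced vertices. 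For the time bound, we solve $\BigO{n}$ CSP instances, each on at most $n$ variables, for a total of $\BigO{n\cdot 1.3645^n\cdot poly(n)}=\OStar{1.3645^n}$. I expect the only genuinely delicate step to be checking that a satisfying CSP assignment with the forced vertices really induces a \emph{stable cutset} rather than merely a valid labelling --- i.e.\ that the $S$-labelled set actually disconnects the graph --- which is precisely where connectedness of $G$ and the forced $A$- and $B$-vertices enter.
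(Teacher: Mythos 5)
Your proposal is correct and follows essentially the same route as the paper: the identical edge-constraint encoding into \problemfont{(3,2)-CSP}, the same $\BigO{n}$-fold guessing of a forced $A$-vertex $u$ and $B$-vertex $v$ (falling back to $u\in S$ with a forced neighbour $x\in A$), and the same invocation of the Beigel--Eppstein solver. Your write-up is in fact somewhat more careful than the paper's, since you spell out the two-directional correctness argument (component-wise labelling for the forward direction, the path/separation argument and non-emptiness of $S$ via connectedness for the backward direction) that the paper leaves implicit.
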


\begin{figure}
	\centering
	\includegraphics[width=0.5\linewidth]{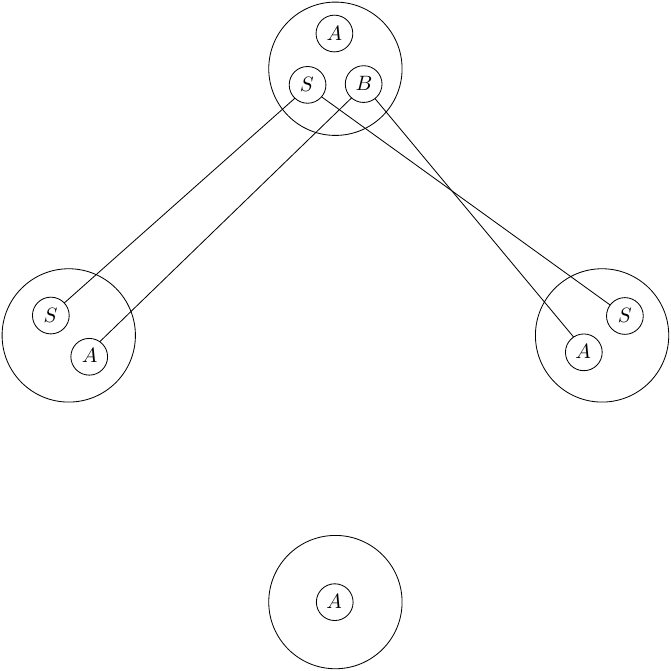}
	\caption{The annotated graph from Figure \ref{fig:annotatedgraphexample} transformed to a \problemfont{(3,2)-CSP} instance. The lines in the representation indicate the constraints.}
	\label{fig:32cspgraphexample}
\end{figure}

\subsection{Reducing instance size}
Besides the (3,2)-CSP solver of Beigel and Eppstein, we are also using the following lemma of the same paper:

\begin{lemma}[\cite{3ColoringInTime1.3289^n}]\label{lem:decreaseValuesBE}
	Let $v$ be a variable in a \problemfont{($a$, 2)-CSP} instance, such that only two values are allowed at $v$. Then we can find an equivalent \problemfont{($a$, 2)-CSP} instance with one fewer variable.
\end{lemma}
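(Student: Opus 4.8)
The plan is to eliminate $v$ by ``folding'' the constraints incident to $v$ into new binary constraints among the variables that share a constraint with $v$. First I would rename the two allowed values of $v$ as $0$ and $1$, and let $w_1,\dots,w_k$ be the variables appearing together with $v$ in some constraint $C(v,w_i)$. For each $i$, I would split $\mathrm{dom}(w_i)$ according to compatibility with $v$: let $A_i$ be the set of values $x\in\mathrm{dom}(w_i)$ for which the assignment $(v{=}0,\,w_i{=}x)$ is permitted by $C(v,w_i)$, and let $B_i$ be the set of values compatible with $v{=}1$. Any value of $w_i$ lying outside $A_i\cup B_i$ can never be used in a satisfying assignment, so I would first delete those values from $\mathrm{dom}(w_i)$.

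Next I would build the new instance: delete $v$ together with every constraint $C(v,w_i)$; restrict each $\mathrm{dom}(w_i)$ to $A_i\cup B_i$ as above; and for every pair $i<j$, conjoin to the (possibly empty) constraint already present between $w_i$ and $w_j$ the new constraint that forbids exactly the tuples $(x,y)$ with $x\in\mathrm{dom}(w_i)\setminus A_i$ and $y\in\mathrm{dom}(w_j)\setminus B_j$, as well as the tuples with $x\in\mathrm{dom}(w_i)\setminus B_i$ and $y\in\mathrm{dom}(w_j)\setminus A_j$. Intuitively these say that $w_i$ and $w_j$ may not simultaneously rule out the choice $v{=}0$ and the choice $v{=}1$. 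The resulting instance is still a $(a,2)$-CSP --- domains only shrank, and every new constraint involves two variables --- and it has one fewer variable, so it suffices to show it is equivalent to the original.

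To prove equivalence, in the forward direction I would take a satisfying assignment of the original instance and drop $v$: if $v$ was assigned $0$ then every $w_i$ lies in $A_i$, so every restricted domain is respected and no new pair constraint is violated (symmetrically if $v$ was assigned $1$), while all constraints not mentioning $v$ are untouched. For the converse, given a satisfying assignment of the new instance, the key claim is that either $w_i\in A_i$ for all $i$, or $w_i\in B_i$ for all $i$: otherwise there is an index $i$ with $w_i\notin A_i$ (hence $w_i\in B_i$, since the domain is $A_i\cup B_i$) and an index $j$ with $w_j\notin B_j$ (hence $w_j\in A_j$); the case $i=j$ is immediately contradictory, and for $i\neq j$ the pair constraint between $w_i$ and $w_j$ is violated. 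In the first case I would set $v{=}0$ and in the second $v{=}1$; by the definitions of $A_i$ and $B_i$ every constraint $C(v,w_i)$ is then satisfied, and all other constraints are unchanged.

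The main obstacle is the converse direction, specifically recognising and verifying that the apparently high-arity requirement ``$(\bigwedge_i w_i\in A_i)\ \lor\ (\bigwedge_i w_i\in B_i)$'' is faithfully captured by the quadratically many pairwise constraints described above; the short case distinction $i=j$ versus $i\neq j$, and the bookkeeping needed when two neighbours of $v$ already share a constraint (so that one intersects tuple sets rather than adding a second constraint), are the only other points requiring care.
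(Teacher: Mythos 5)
Your proposal is correct: the domain restriction to $A_i\cup B_i$ together with the pairwise constraints forbidding ``$w_i$ rules out $v{=}0$ while $w_j$ rules out $v{=}1$'' is exactly the variable-elimination step of Beigel and Eppstein, and your case analysis ($i=j$ versus $i\neq j$) for the converse direction is the right way to see that the pairwise constraints suffice. The paper itself states this lemma as an imported result and gives no proof, so there is nothing to compare beyond noting that your reconstruction matches the argument in the cited source.
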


This is useful due to the correspondence between the semantics of the annotation function $p$ for some vertex, and the amount of allowed values for the corresponding variable in the (3,2)-CSP instance. Therefore, if we have a vertex $v$ with $|p(v)| \leq 2$, then the corresponding variable in the (3,2)-CSP instance can be deleted, thereby reducing the instance size. 

	\section{Exact algorithm}\label{sec:generalExactAlg}
In this section, we describe how we are reducing the time bound for the general case even further. The general idea of the algorithm is to find vertex sets in the graph that allow branching in such a way that for many vertices in the set, the number of possible labels is reduced to less than three. For example, when we have a clique induced in the graph $G$, then the vertices of the clique can only be labelled with either $\set{A, S}$ or $\set{B, S}$. This is because a vertex labelled with $A$ can not be adjacent to a vertex labelled with $B$. If we branch on these two options, the possible labels for vertices of the clique are reduced to less than three. Eventually, we have a branching tree with in the leaves, annotated graph instances that can be greatly reduced by Lemma~\ref{lem:decreaseValuesBE} when transformed to (3,2)-CSP instances.

First, we will present a intuitive `stupid' algorithm that uses the above described idea. We then refine the algorithm to make smarter choices in finding the vertex sets, which will reduce the time complexity even further.

\subsection{Initial approach}\label{sec:exactAlgInitialApproach}
The input of the algorithm is the annotated graph $G = (V, E, p)$ as described in Section \ref{sec:scsAsAnnot}. We describe the steps of the algorithm and establish its correctness through a sequence of claims followed by the branching rules. The algorithm starts by checking whether the neighbourhood of each vertex $v$ in $G$ forms a stable set. If that is the case then we are done, because $N(v)$ separates $v$ from $V \setminus N[v]$. After this step we claim,

\begin{claim}\label{clm:vertexPartOfTraingle}
	Every vertex in $G$ is part of some triangle.
\end{claim}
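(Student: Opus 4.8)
The claim follows from the immediately preceding step of the algorithm: if we reach this point, then for every vertex $v$ the neighbourhood $N(v)$ is \emph{not} a stable set (otherwise the algorithm would have terminated, reporting that $N(v)$ is a stable cutset separating $v$ from $V\setminus N[v]$). So the plan is simply to unpack what "$N(v)$ is not stable" means and combine it with $v$ being a single vertex.

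\begin{proof}
Let $v$ be an arbitrary vertex of $G$. Since the algorithm did not terminate in the previous step, $N(v)$ is not a stable set. Hence there exist two vertices $x, y \in N(v)$ with $(x,y) \in E$. But then $\set{v, x, y}$ is a clique of size three, i.e.\ a triangle, since $v$ is adjacent to both $x$ and $y$ by definition of $N(v)$, and $x$ is adjacent to $y$. Therefore $v$ lies on a triangle.

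One small point to address: the algorithm needs $G$ to be connected and to actually \emph{have} a stable cutset decision to make. If $G$ is disconnected it has a trivial stable cutset (the empty set is already a cutset, or any cut vertex works), but by assumption the input is connected. Also, if $|V| \le 2$ or $G$ has a vertex of degree at most one, the stable cutset question is easily settled directly; we may thus assume every vertex has degree at least two, which is in any case implied by the previous termination check failing for low-degree vertices. With these conventions the argument above applies to every $v \in V$, proving the claim.
\end{proof}

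The only thing to be careful about is making sure the "previous step" is stated precisely enough that "$N(v)$ is not stable for all $v$" is a genuine invariant at this point in the algorithm; since the excerpt already says the algorithm "starts by checking whether the neighbourhood of each vertex $v$ in $G$ forms a stable set" and terminates if so, this is immediate and there is no real obstacle. The main (minor) subtlety is the edge case of vertices that are not in any triangle because their neighbourhood is small — but a non-stable neighbourhood automatically has at least two vertices with an edge between them, so no such vertex survives the check.
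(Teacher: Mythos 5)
Your proof is correct and is exactly the argument the paper intends (the paper states this claim without an explicit proof, relying on the preceding stable-neighbourhood check): since no $N(v)$ is stable, every vertex $v$ has two adjacent neighbours $x,y$, and $\{v,x,y\}$ is a triangle. The edge-case remarks are harmless but unnecessary, since a non-stable $N(v)$ automatically contains at least two adjacent vertices.
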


We greedily construct the family of vertex sets $\mathcal{H}$ by adding as many pairwise disjoint triangles from $G$ as possible, such that $\mathcal{H}$ is maximal. Let $F$ denote the set of vertices that is not part of some triangle in $\mathcal{H}$. 

\begin{claim}\label{clm:fAdjacentToTriangle}
	Given the above definitions, all the vertices in $F$ are adjacent to a vertex of some triangle in $\mathcal{H}$.
\end{claim}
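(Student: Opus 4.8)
\textbf{Proof proposal for Claim~\ref{clm:fAdjacentToTriangle}.}

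The plan is a short argument by contradiction that exploits the maximality of $\mathcal{H}$. Suppose, for contradiction, that some vertex $v \in F$ has no neighbour among the vertices covered by the triangles in $\mathcal{H}$; that is, $v$ is adjacent to no vertex of $\bigcup_{T \in \mathcal{H}} V(T)$. Since the algorithm has passed the preliminary step, we may invoke Claim~\ref{clm:vertexPartOfTraingle}: every vertex of $G$ lies in some triangle, so in particular $v$ lies in a triangle $T = \set{v, x, y}$ of $G$.

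The key observation is that $x$ and $y$ cannot belong to any triangle of $\mathcal{H}$. Indeed, $v$ is adjacent to both $x$ and $y$ (they form a triangle with $v$), and by our assumption $v$ is adjacent to no vertex of any triangle in $\mathcal{H}$; hence $x \notin \bigcup_{T' \in \mathcal{H}} V(T')$ and likewise $y \notin \bigcup_{T' \in \mathcal{H}} V(T')$. Moreover $v \in F$, so $v$ itself lies in no triangle of $\mathcal{H}$. Therefore the triangle $T = \set{v,x,y}$ is vertex-disjoint from every triangle in $\mathcal{H}$, and $\mathcal{H} \cup \set{T}$ is a strictly larger family of pairwise disjoint triangles of $G$. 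This contradicts the maximality of $\mathcal{H}$, and the claim follows.

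I do not expect a genuine obstacle here; the proof is essentially a one-line maximality argument. The only subtlety worth stating explicitly is the inference ``$v$ adjacent to $x,y$'' together with ``$v$ adjacent to no vertex of $\bigcup\mathcal{H}$'' forces $x,y \notin \bigcup\mathcal{H}$ — this is what lets us conclude $T$ is a valid addition to $\mathcal{H}$. No case analysis or computation is needed.
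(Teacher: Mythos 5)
Your proof is correct and is essentially the paper's argument in contrapositive form: both use Claim~\ref{clm:vertexPartOfTraingle} to place $v$ in a triangle $T$ of $G$ and then invoke the maximality of $\mathcal{H}$ to force one of the other two vertices of $T$ to lie in a triangle of $\mathcal{H}$. No substantive difference.
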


\begin{proof}
	Consider the vertex $f \in F$. By Claim~\ref{clm:vertexPartOfTraingle}, we know that $f$ is part of some triangle $T$ in $G$. By the definition of $F$, we know that $T \notin \mathcal{H}$. Since $\mathcal{H}$ is maximal in $G$, at least one of the two other vertices in $T$ must belong to a triangle $R \in \mathcal{H}$, because otherwise $T$ would have been in $\mathcal{H}$. Therefore, $f$ is adjacent to a vertex of some triangle in $\mathcal{H}$.
\end{proof}

To ensure that every vertex of $G$ is included in some vertex set of $\mathcal{H}$, we add each vertex $f \in F$ to a vertex set (triangle) that contains a neighbour of $f$. The existence of such a vertex set is guaranteed by Claim~\ref{clm:fAdjacentToTriangle}. As a result, each vertex set $W \in \mathcal{H}$ now consists of a triangle $T$ together with a subset of its neighbours, meaning $W = T \cup (W \setminus T)$ where $W \setminus T \subseteq N(T)$.

Our algorithm uses two branching strategies, which both reduce the amount of possible labels for specific vertices. Because of this reduction, we can delete the corresponding variables when the annotated graph instances are transformed into a \problemfont{(3,2)-CSP} instance by Lemma~\ref{lem:decreaseValuesBE}. 

\renewcommand{\labelenumi}{\arabic{enumi}.}
\begin{enumerate}
	\item \textit{Branching on triangles:} As mentioned above, we can branch on cliques which gives two annotated graph instances. Therefore, the possible labels of the vertices in the triangle will be reduced.
	\item \textit{Branching on neighbourhoods:} Alternatively, a configuration on which we can branch is $N[v]$, for some vertex $v \in V$. By branching on the possible labels of $v$, we obtain three instances. In these instances $|p(v)| = 1$, and all the vertices $u \in N(v)$ have $|p(u)| = 2$ because of Lemma~\ref{lem:annGraphRules}. 
\end{enumerate}
\renewcommand{\labelenumi}{(\roman{enumi})}

We apply specific branching rules to each vertex set $W \in \mathcal{H}$. The choice of the branching rule depends on the vertex $v \in T$ with the most neighbours in $W \setminus T$. Only a subset of vertices of $W$ are used in the branching rules.  The remaining vertices become standard variables for the \problemfont{(3,2)-CSP} solver.

\begin{branchingrule}\label{br:closedNeighborhoodrule}
	Let $v$ be the vertex of $T$ where $|N(v) \cap (W \setminus T)|$ is the largest. If $|N(v) \cap (W \setminus T)| \geq 2$, then
	\begin{enumerate}
		\item Assign $p(v) = \set{A}$ and apply Lemma~\ref{lem:annGraphRules} to all vertices in $N(v) \cap W$.
		\item Assign $p(v) = \set{B}$ and apply Lemma~\ref{lem:annGraphRules} to all vertices in $N(v) \cap W$.
		\item Assign $p(v) = \set{S}$ and apply Lemma~\ref{lem:annGraphRules} to all vertices in $N(v) \cap W$.
	\end{enumerate}
	The remaining vertices are handled by the (3,2)-CSP solver.
\end{branchingrule}

\begin{branchingrule}\label{br:triangleRule}
	If Branching Rule \ref{br:closedNeighborhoodrule} is not applicable, then
	\begin{enumerate}
			\item Assign $p(v) = \set{A, S}$ for all vertices $v \in T$.
			\item Assign $p(v) = \set{B, S}$ for all vertices $v \in T$.
	\end{enumerate}
	The remaining vertices are handled by the (3,2)-CSP solver.
\end{branchingrule}

\begin{figure}
	\centering
	\includegraphics[width=0.8\linewidth]{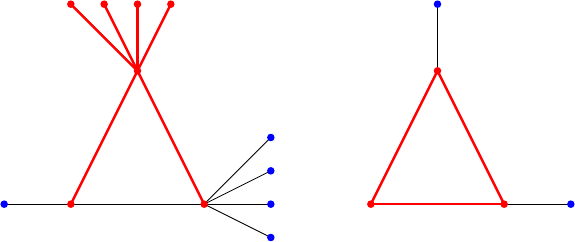}
	\caption{Branching Rule 1 and 2, respectively. The vertices highlighted in red correspond to the configurations being branched on, while the vertices in blue represent those that remain as standard variables in the (3,2)-CSP instance.}
	\label{fig:branchingrules}
\end{figure}

Once the algorithm has branched on all the vertex sets of $\mathcal{H}$, the leaves of the branching tree contain annotated graphs where many vertices are annotated with at most two possible labels. The corresponding variables of these vertices will be deleted when the annotated graph is converted to a (3,2)-CSP instance.
We calculate the total running time of the algorithm by finding the maximum \textit{cost} of any vertex in the graph. The cost of a vertex is defined as the product $p$ of factors involving vertices of a disjoint vertex set, spread evenly over the vertices. So, for some disjoint vertex set $W \in \mathcal{H}$, the cost of any vertex in this set is $c = p^{1/|W|}$. The vertex set for which $c$ is the largest will define the total time: $\OStar{c^n}$. The factors in this case are 2 or 3 if a triangle or closed neighbourhood is used, and a factor of $1.3645$ is used per vertex that will be converted to a normal variable in the (3,2)-CSP instance.

\begin{lemma}
	The worst-case vertex cost among all vertex sets is 1.3112.
\end{lemma}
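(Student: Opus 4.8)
The plan is to bound the total running time by a product of local contributions, one per vertex set $W \in \mathcal{H}$, and then take the $|W|$-th root of the worst such contribution. Since $\mathcal{H}$ partitions $V$ and the algorithm branches on its members one after another, the branching tree has at most $\prod_{W \in \mathcal{H}} b_W$ leaves, where $b_W \in \set{2,3}$ is the number of instances produced by the branching rule applied to $W$. In every leaf, each vertex whose label set was shrunk to size at most two is removed when the instance is converted to a \problemfont{(3,2)-CSP} instance (Lemma~\ref{lem:decreaseValuesBE}), and the remaining vertices are handled by the Beigel--Eppstein solver at a cost of $1.3645$ per variable. Hence, writing $r_W$ for the number of vertices of $W$ left with three possible labels by the branching rule on $W$, the running time is $\OStar{\prod_{W} b_W \cdot 1.3645^{\sum_W r_W}} = \OStar{\prod_{W} \left(b_W \cdot 1.3645^{r_W}\right)}$. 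With $n = \sum_W |W|$ this is at most $\OStar{c^n}$ for $c = \max_{W \in \mathcal{H}} \left(b_W \cdot 1.3645^{r_W}\right)^{1/|W|}$, so it suffices to prove $\left(b_W \cdot 1.3645^{r_W}\right)^{1/|W|} \le 1.3112$ for every $W$.

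Next I would parametrise $W = T \cup D$, with $T$ its triangle and $D = W \setminus T \subseteq N(T)$, by $d = |D|$ (so $|W| = 3 + d$) and by $k = \max_{t \in T} |N(t) \cap D|$, the quantity that determines which branching rule is used. The structural fact to record first is $d \le 3k$: every vertex of $D$ has a neighbour in $T$, so the number of $T$--$D$ edges is at least $d$, while it is at most $3k$ since each of the three vertices of $T$ has at most $k$ neighbours in $D$. This inequality is exactly what stops Branching Rule~\ref{br:closedNeighborhoodrule} from becoming expensive for large $d$.

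Then I would split along the two cases of the algorithm. If $k \ge 2$, Branching Rule~\ref{br:closedNeighborhoodrule} applies, so $b_W = 3$; the branched vertex $v$ and all of $N(v) \cap W$ --- namely the two remaining triangle vertices together with the $k$ neighbours of $v$ in $D$ --- are cut down to at most two labels by Lemma~\ref{lem:annGraphRules}, leaving $r_W = |W| - (k+3) = d - k$ solver variables. Using $k \ge \max(2, d/3)$ one checks that $\left(3 \cdot 1.3645^{d-k}\right)^{1/(3+d)}$ is maximised at $d = 6$, $k = 2$, where it equals $\left(3 \cdot 1.3645^{4}\right)^{1/9} \approx 1.2972$. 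If instead $k \le 1$, Branching Rule~\ref{br:triangleRule} applies, so $b_W = 2$; now only the three triangle vertices get a two-element label set, while the vertices of $D$ keep three labels, since the assignments $\set{A,S}$ and $\set{B,S}$ are not singletons and so Lemma~\ref{lem:annGraphRules} does not act on $D$. Thus $r_W = d$, and $k \le 1$ forces $d \le 3k \le 3$, so $\left(2 \cdot 1.3645^{d}\right)^{1/(3+d)}$ only needs to be evaluated for $d \in \set{0,1,2,3}$; it is largest at $d = 3$, giving $\left(2 \cdot 1.3645^{3}\right)^{1/6} \approx 1.3112$.

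Taking the maximum over the two cases yields $c = 1.3112$, as claimed, and the configuration achieving it --- a triangle each of whose vertices carries a single private extra neighbour --- is realisable, so the bound is tight for this analysis. I expect the main obstacle to be the case bookkeeping rather than any deep argument: one must correctly identify which vertices of $W$ are reduced by each rule (the asymmetry being that Branching Rule~\ref{br:triangleRule} reduces $T$ but not $D$), and then combine $d \le 3k$ with the case condition to see that the resulting one-variable expressions peak at the small values $d = 6, k = 2$ and $d = 3, k = 1$; the remaining monotonicity checks are routine.
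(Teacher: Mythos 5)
Your proof is correct and follows essentially the same route as the paper: split on which branching rule applies, charge $b_W \cdot 1.3645^{r_W}$ to each vertex set $W$, and show the worst case is the triangle with three private pendant neighbours under Branching Rule~\ref{br:triangleRule}, giving $(2\cdot 1.3645^3)^{1/6}\approx 1.3112$, while Branching Rule~\ref{br:closedNeighborhoodrule} peaks at $(3\cdot 1.3645^4)^{1/9}\approx 1.2972$. The only difference is cosmetic: you make explicit the constraint $d\le 3k$ and the monotonicity checks that the paper asserts implicitly when it declares the symmetric configuration to be the worst case.
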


\begin{proof}
	We begin by analysing the vertex costs associated with Branching Rule \ref{br:closedNeighborhoodrule}. Let $v$ be the vertex of $T$ where $|N(v) \cap (W \setminus T)|$ is the largest. We define the numerical variable $x = |N(v) \cap (W \setminus T)|$. The other vertices in $T$ have a smaller than or equal amount of neighbours in $W \setminus T$. Therefore, the worst-case vertex cost for this rule is defined as $(3 \cdot 1.3645^{2x})^{\tfrac{1}{3+3x}}$. We want this value to be lower than $1.3$. By solving the inequality $(3 \cdot 1.3645^{2x})^{\tfrac{1}{3+3x}} < 1.3$, we find that for a positive $x$, the inequality holds whenever $x > 1.8821$. Because Branching Rule \ref{br:closedNeighborhoodrule} is conditioned on $x \geq 2$, it follows that the vertex cost for any vertex set to which this rule applies will always be less than $1.3$.
	
	Following this, we examine Branching Rule \ref{br:triangleRule}. Among the four distinct vertex sets to which this rule applies, the worst-case vertex cost is retrieved when each vertex in $T$ is adjacent to one unique vertex in $W \setminus T$. This results in a vertex cost of $(2 \cdot 1.3645^3)^{1/6} \approx 1.3112$.  
\end{proof}


\begin{theorem}
	There is a $\OStar{1.3112^n}$ algorithm for \scs.
\end{theorem}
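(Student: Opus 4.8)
The plan is to assemble the procedure sketched above into a complete algorithm and then argue correctness and running time using the structural claims of this section together with the vertex-cost lemma just proved.

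First, for the overall structure: the branching algorithm of this section, given an annotated graph, decides whether there is an assignment of a single label to each vertex respecting the adjacency constraints of Lemma~\ref{lem:annGraphRules}, so to enforce the non-emptiness constraints I wrap it inside the polynomial-size collection of pinning instances described in Section~\ref{sec:annTo32CSP} (fix one vertex to $\set{A}$ and another candidate to $\set{B}$, or move a vertex into $S$ and fix one of its neighbours to $\set{A}$); here connectedness of $G$ guarantees that any adjacency-respecting labelling with $A$ and $B$ both non-empty must also use $S$, so a pinned instance has a solution exactly when the corresponding stable cutset exists. Inside one call, the algorithm first tests whether $N(v)$ is a stable set for some $v$ with $V\setminus N[v]\neq\emptyset$ and accepts if so; otherwise Claim~\ref{clm:vertexPartOfTraingle} applies, we greedily build a maximal family $\mathcal{H}$ of pairwise disjoint triangles, and by Claim~\ref{clm:fAdjacentToTriangle} each vertex of $F$ can be attached to a member of $\mathcal{H}$ containing one of its neighbours, so that $\mathcal{H}$ becomes a partition of $V$ into sets $W=T\cup(W\setminus T)$ with $W\setminus T\subseteq N(T)$.

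For correctness I would check that the two branching rules are exhaustive on each $W$ and only ever invoke Lemma~\ref{lem:annGraphRules}, hence no adjacency-respecting labelling is lost in the branching tree. In Branching Rule~\ref{br:closedNeighborhoodrule}, branching over the three labels of $v\in T$ is exhaustive and propagating Lemma~\ref{lem:annGraphRules} to $N(v)\cap W$ is sound; this reduces $v$, the other two vertices of $T$ (which are neighbours of $v$), and the $x=|N(v)\cap(W\setminus T)|$ neighbours of $v$ in $W\setminus T$ to at most two possible labels. In Branching Rule~\ref{br:triangleRule}, since $T$ is a clique no adjacency-respecting labelling places $A$ on one vertex of $T$ and $B$ on another, so any such labelling of $T$ either uses no $B$ (covered by branch~1, which restricts every vertex of $T$ to $\set{A,S}$) or uses no $A$ (covered by branch~2), reducing all three vertices of $T$. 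Running both rules over all $W\in\mathcal{H}$ then yields a branching tree whose leaves are annotated graphs such that the input admits a valid labelling iff some leaf does; converting each leaf to a \problemfont{(3,2)-CSP} instance as in Section~\ref{sec:annTo32CSP} and eliminating, via Lemma~\ref{lem:decreaseValuesBE}, every vertex reduced to at most two labels, we solve the resulting instance with the Beigel--Eppstein algorithm.

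For the running time I would account costs exactly as in the preceding lemma: the number of leaves is $\prod_{W\in\mathcal{H}} b_W$ with $b_W\in\set{2,3}$, and the Beigel--Eppstein call at a leaf costs $\OStar{1.3645^{r}}$ with $r$ the number of surviving variables, which factors as the product over $W\in\mathcal{H}$ of $1.3645$ raised to the number of non-reduced vertices of $W$. Since $\mathcal{H}$ partitions $V$, the total cost of one call is $\OStar{\prod_{W\in\mathcal{H}} c_W^{|W|}}=\OStar{\prod_{v\in V} c_{W(v)}}$, where $c_W$ is the per-vertex cost of $W$; by the preceding lemma $c_W\le 1.3112$ for every $W\in\mathcal{H}$, so this is $\OStar{1.3112^{n}}$, and multiplying by the polynomial number of pinning instances and the polynomial cost of the preprocessing and of building $\mathcal{H}$ leaves the bound unchanged. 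I expect the main obstacle to be the bookkeeping in this last paragraph: verifying that the leaf count and the per-leaf \problemfont{(3,2)-CSP} cost really do factor as a single product $\prod_{v\in V} c_{W(v)}$ over the partition, without double-counting any vertex or branch, so that the vertex-cost lemma can be applied verbatim; everything else is a routine combination of the claims and lemmas already established.
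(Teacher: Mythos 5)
Your proposal is correct and follows essentially the same route as the paper: pinning instances to enforce non-emptiness, the greedy triangle family $\mathcal{H}$ extended to a partition via Claims~\ref{clm:vertexPartOfTraingle} and~\ref{clm:fAdjacentToTriangle}, exhaustive branching per $W\in\mathcal{H}$ with the two branching rules, elimination of reduced vertices via Lemma~\ref{lem:decreaseValuesBE}, and the per-vertex cost accounting that makes the total time factor as $\prod_{W}c_W^{|W|}\leq 1.3112^n$. The details you fill in (exhaustiveness of Branching Rule~\ref{br:triangleRule} from the clique property, the requirement $V\setminus N[v]\neq\emptyset$ in the initial test, and the explicit factoring of leaf count times per-leaf CSP cost) are exactly what the paper leaves implicit.
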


\subsection{Smarter vertex sets}
In the above algorithm, we used random assignments of the vertices in $F$ to get all the vertices of $G$ involved in the vertex set family $\mathcal{H}$. With this, the slowest vertex set can be achieved which defines the total time of the algorithm. Below, we describe a polynomial time algorithm that will divide the vertices of $G$ into disjoint vertex sets. This algorithm ensures that the slowest and second slowest vertex sets are avoided. The slowest vertex set, as described in Section~\ref{sec:exactAlgInitialApproach}, is the vertex set $W$ where all the vertices of the triangle $T \subseteq W$ are adjacent to one unique vertex of $W \setminus T$. The second slowest is the set $W$ where two vertices of the triangle $T \subseteq W$ are adjacent to one unique vertex of $W \setminus T$ and the third vertex of $T$ is not adjacent to any vertex of $W \setminus T$. In the rest of this section we call these vertex sets the \textit{slow vertex sets}, and all others the \textit{fast vertex sets}. 

Consider the annotated graph $G = (V, E, p)$ as described in Section~\ref{sec:scsAsAnnot} as input. Again, we first check whether any vertex $v$ has stable neighbours. If that is the case, then the algorithm terminates as mentioned in Section~\ref{sec:exactAlgInitialApproach}. After this step, we know that Claim~\ref{clm:vertexPartOfTraingle} holds in $G$. 

The algorithm builds the family of vertex sets $\mathcal{H}$ by repeatedly identifying the vertex sets, adding them to $\mathcal{H}$ and removing them from the graph $G$. To guarantee that this process can progress until the graph is empty, an invariant is used: \emph{Every vertex in $G$ is part of some triangle ($K_3$)}. We use the function $r_G(U)$ that denotes the set of vertices in $G - U$ that are no longer part of any triangle. When it is clear from the context which graph is being used, we will write $r(U)$ without explicitly naming the graph.

The procedure begins by selecting an arbitrary triangle $T$ in the graph, which must exists due to the invariant. Based on the number of vertices in $r(T)$ and their adjacency to the vertices of $T$, the procedure proceeds via a case analysis. Each case specifies how to identify the vertex sets that will be added to $\mathcal{H}$, and deleted from $G$. After presenting the case analysis, we prove that only fast vertex sets will be added to $\mathcal{H}$, and that the invariant is preserved. 

\begin{case}[Two or more vertices in $r(T)$ are adjacent to the same vertex in $T$, or $|r(T)| = 1$.]\label{case:exactAlgEasyCase}
	This is the easy case. We add $W = T \cup r(T)$ to $\mathcal{H}$ and update the graph to $G - W$.
\end{case}

\begin{case}[$|r(T)| = 3$ and every vertex in $r(T)$ is adjacent to a different vertex in $T$.]\label{case:exactAlgThreePerCorner}
	This case is visualized in Figure~\ref{fig:exactalgcase2}. By the invariant we know that all the vertices of $r(T)$ are part of some triangle. Let $T_1$, $T_2$ and $T_3$ be those triangles, and we denote the vertices of triangle $T_i$ ($i \in \set{1,2,3}$) by $v_{i,1}, v_{i,2}, v_{i,3}$. By the definition of $r(T)$, each triangle $T_i$ must contain one vertex from $T$. Let $v_{i, 1}$ be that vertex, and let $v_{i, 2}$ be the vertex in $r(T)$. There are two subcases on $v_{i, 3}$ of the $T_i$ triangles:
	
	\begin{subcase}[The vertices $v_{1,3}$, $v_{2,3}$ and $v_{3,3}$ are distinct.]\label{case:thrdVtxsDistinct}
		This case is visualized on the left of Figure~\ref{fig:exactalgcase2}. We create the following vertex sets: $W_1, W_2, W_3$, where $W_i = T_i - r(T_i)$. These three vertex sets are added to $\mathcal{H}$ and we update the graph to $G - \bigcup_i^3 W_i$.
	\end{subcase}
	
	\begin{subcase}[Two or more of the vertices $v_{1,3}$, $v_{2,3}$ and $v_{3,3}$ are the same.]\label{case:commonThirdVtx}
		This case is visualized on the right of Figure~\ref{fig:exactalgcase2}. We call the common third vertex $u$. Since $u$ is not part of $r_G(T)$, it must belong to some triangle $P$ that does not include any vertices from $T$. So, we create $W_1 = P \cup r_G(P)$ and update the graph to $G' = G - W_1$. Now, we make a second vertex set $W_2 = T \cup r_{G'}(T)$ and update the graph again to $G' - W_2$. These two vertex sets are added to $\mathcal{H}$.
	\end{subcase}
	
	\begin{figure}
		\centering
		\includegraphics[width=0.8\linewidth]{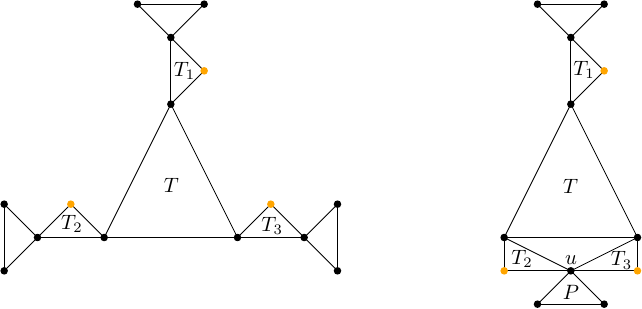}
		\caption{Visualization of Case \ref{case:exactAlgThreePerCorner}. The vertices in orange correspond to the vertices in $r(T)$.}
		\label{fig:exactalgcase2}
	\end{figure}
\end{case}

\begin{case}[$|r(T)| = 2$ and every vertex in $r(T)$ is adjacent to a different vertex in $T$.]\label{case:exactAlgTwoVtxToDiffCorn}
	Let $\set{a,b}$ denote the vertices of $r(T)$, and let $\set{v_1, v_2, v_3}$ denote the vertices of $T$. There are two cases to consider within this case.
	
	\begin{subcase}[$a$ is adjacent to $v_1$ and $v_2$, and $b$ is only adjacent to $v_3$.]\label{case:vtxConnectToTwoOfT}
		Let $A$ be the triangle $\set{a, v_1, v_2}$, and let $B$ be the triangle that includes vertex $b$, $v_3$, and some other vertex. First, we create $W_1 = B \cup r_G(B)$ and update the graph to $G' = G - W_1$. We then make the second vertex set $W_2 = A \cup r_{G'}(A)$, and again update the graph to $G' - W_2$. The vertex sets $W_1, W_2$ are added to $\mathcal{H}$.
	\end{subcase}

	\begin{subcase}[$a$ is only adjacent to $v_1$ and $b$ is only adjacent to $v_2$.]\label{case:adjToTwoDiffVtxs}
		 This subcase is visualized in Figure~\ref{fig:exactalgcase3}. We try to find the first triangle $T$ in the direction of either $a$ or $b$, that satisfies $|r(T)| \neq 2$. The direction does not matter, so let us assume that we search in the direction of $a$. We first check the triangle $A$ where $a$ and $v_1$ are part of. If $|r(A)| = 2$, then we continue to the next triangle. Let $T_s$ denote the $s$-th checked triangle, where $T_0 = T$, $T_1 = A$, and so on. Two cases can occur with $T_s$: either $|r(T_s)| \neq 2$, or $T_{s+1} = T_0$ and the sequence of triangles form a cycle.
		
		\begin{subsubcase}[$|r(T_s)| \neq 2$]\label{case:chainNoCircle}
			We start by creating $W_1 = T_s \cup r_G(T_s)$, update the graph to $G_1 = G - W_1$, and continue to $T_{s-2}$. This triangle will now have $|r_{G_1}(T_{s-2})| = 1$. So, we create the vertex set $W_2 = T_{s-2} \cup r_{G_1}(T_{s-2})$, and update the graph again, $G_2 = G_1 - W_2$. We keep on doing this with steps of two.
			
			If $s$ is odd, we eventually end up in $T_1$ with the reduced graph $G_{s/2 - 1}$. We create the final vertex set $W_{s/2} = T_1 \cup r_{G_{s/2}}(T_1)$, and update the graph for the last time $G_{s/2} = G_{s/2 - 1} - W_{s/2}$. The vertex sets $W_1, ..., W_{s/2}$ are added to $\mathcal{H}$.
			
			If $s$ is even, we end up in $T$ (or $T_0$) with graph $G_{s/2 - 1}$. The final vertex set will therefore be $W_{s/2} = T \cup r_{G_{s/2 - 1}}(T) = T \cup \set{b}$. Again, we update the graph for the final time $G_{s/2} = G_{s/2 - 1} - W_{s/2}$. 
		\end{subsubcase} 
		
		\begin{subsubcase}[$T_{s+1} = T_0$]\label{case:chainIsCircle}
			If $s$ is odd, we define $W_1$ as $T_s$ together with the vertex that is in $T_{s-1}$ and in $r_G(T_s)$. So more formal, $W_1 = T_s \cup \big(T_{s-1} \cap r_G(T_s) \big)$. We then update the graph to $G_1 = G - W_1$, and continue to $T_{s-2}$. The size of $r_{G_1}(T_{s-2})$ has now become one, and therefore we create the next vertex set $W_2 = T_{s-2} \cup r_{G_1}(T_{s-2})$. Again, we update the graph and continue in steps of two. Eventually, we end up in $T_1$ with as graph $G_{s/2 - 1}$. In $G_{s/2 - 1}$ only one vertex of $T$ is left. So, we create the final vertex set $W_{s/2} = T_1 \cup r_{G_{s/2-1}}(T_1)$, and update the graph for the last time, $G_{s/2} = G_{s/2-1} - W_{s/2}$. The vertex sets $W_1, ..., W_{s/2}$ are added to $\mathcal{H}$.
			
			If $s$ is even, we define $W_1 = T_s \cup r_G(T_s) \cup T$. We update the graph to $G_1 = G - W_1$ and continue to $T_{s-2}$. The size of $r_{G_1}(T_{s-2})$ has now become one, and therefore we create the next vertex set $W_2 = T_{s-2} \cup r_{G_1}(T_{s-2})$. Again, we update the graph and continue in steps of two. Eventually, we end up in $T_2$ with the reduced graph $G_{s/2 - 1}$. Here, we create the final vertex set $W_{s/2} = T_2 \cup r_{G_{s/2 - 1}}(T_2)$ and update the graph for the last time, $G_{s/2} = G_{s/2 - 1} - W_{s/2}$. The vertex sets $W_1, ..., W_{s/2}$ are added to $\mathcal{H}$.
		\end{subsubcase}
	\end{subcase}
	
	\begin{figure}
		\centering
		\includegraphics[width=0.7\linewidth]{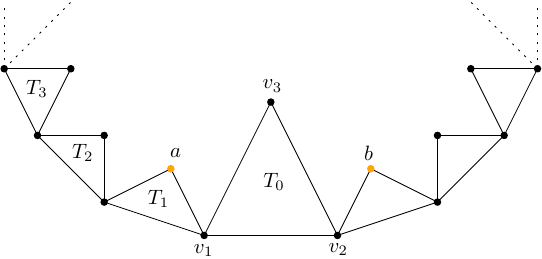}
		\caption{Case \ref{case:adjToTwoDiffVtxs} visualized. The vertices in orange correspond to the vertices in $r(T_0)$.}
		\label{fig:exactalgcase3}
	\end{figure}
\end{case}

\medskip
To prove that the above described procedure is correct, we first establish an auxiliary lemma that will be used multiple times in the subsequent lemma proving the correctness of the whole procedure. 

\begin{lemma}\label{lem:rOfTheRisNotAdjacent}
	Let $G = (V, E)$ be a graph where every vertex is part of some triangle. Consider an arbitrary triangle $T$ in $G$, let $x$ be a vertex in $r(T)$, and let $X$ denote the triangle that contains $x$ and at least one vertex from $T$. If a vertex $y \in r(X)$ is adjacent to $x$, then $y \in r(T)$.
\end{lemma}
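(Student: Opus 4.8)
The plan is to work straight from the definition. Recall that $z \in r(U)$ means $z\notin U$ and $z$ lies in no triangle of $G-U$; since every vertex of $G$ lies in a triangle, this is equivalent to saying $z\notin U$ and \emph{every} triangle of $G$ containing $z$ also contains a vertex of $U$. So to prove $y\in r(T)$ I would establish two things: (a) $y\notin T$, and (b) every triangle $Y$ of $G$ with $y\in Y$ satisfies $Y\cap T\neq\emptyset$. The hypothesis $y\in r(X)$ already tells us every such $Y$ meets $X$, so (b) reduces to: every triangle through $y$ that meets $X$ also meets $T$.

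First I would record the structure of $X$. Since $x\in r(T)$, the triangle $X\ni x$ cannot avoid $T$, so $X\cap T\neq\emptyset$; hence either $X$ contains two vertices of $T$, or $X=\{x,t,w\}$ with $t$ the unique vertex of $X\cap T$ and $w\notin T$. The two-vertex case is immediate for (b): here $X\setminus\{x\}\subseteq T$, so a triangle $Y$ through $y$ meeting $X$ either already meets $T$ (when it hits $X\setminus\{x\}$), or it contains $x$, in which case $Y$ is a triangle through $x$ and meets $T$ by $x\in r(T)$.

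The one-vertex case $X=\{x,t,w\}$ is the heart of the argument. A triangle $Y\ni y$ meeting $X$ contains $t$ (then $Y\cap T\neq\emptyset$, done), or contains $x$ (then $Y\cap T\neq\emptyset$ by $x\in r(T)$, done), or contains $w$ but neither $t$ nor $x$. In this last sub-case $Y=\{y,w,z\}$ for some $z$, so $w\sim y$; combining with $w\sim x$ (as $w\in X$) and $x\sim y$ (hypothesis), the set $\{x,w,y\}$ is a triangle, and it misses $T$ entirely since $x\notin T$, $w\notin T$, and $y\notin T$ by (a) — contradicting $x\in r(T)$. Hence this sub-case cannot occur, which finishes (b). For (a): were $y\in T$, then since $y\notin X$ there is $t'\in(X\cap T)\setminus\{y\}$, and $t'\sim y$ (both in $T$), $t'\sim x$ (both in $X$), $x\sim y$ (hypothesis) exhibit $\{x,t',y\}$ as a triangle through $x$ other than $X$; this conflicts with $X$ being the triangle attached to $x$ singled out in the statement (equivalently, in the settings where the lemma is applied $y\in T$ does not arise), so $y\notin T$.

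I expect the main obstacle to be the bookkeeping in the one-vertex case: making sure the sub-case ``$Y$ meets $X$ only through $w$'' is genuinely the only remaining possibility, and that the small triangle $\{x,w,y\}$ is correctly seen to miss $T$ — this is exactly where the interplay between $x\in r(T)$ and $y$ being a neighbour of $x$ outside $T$ is decisive. Step (a), that $y\notin T$, is the other delicate point, since it relies on reading $X$ as the distinguished triangle associated with $x$ rather than an arbitrary triangle through it.
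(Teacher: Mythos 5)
Your argument for part (b) is essentially the paper's own proof: for an arbitrary triangle $Y$ through $y$, the hypothesis $y\in r(X)$ forces $Y$ to meet $X$, and if it meets $X$ only in the vertex $w\in X\setminus(T\cup\{x\})$, then the edges $x\sim y$, $x\sim w$, $y\sim w$ produce a triangle avoiding $T$, contradicting $x\in r(T)$ --- this is exactly the paper's case split on whether $x\in Y$, with your ``two-vertex'' and ``contains $t$'' subcases absorbed into the same contradiction. The only divergence is your point (a): the paper silently assumes $y\notin T$, whereas you flag it explicitly; your justification (reading $X$ as \emph{the} distinguished triangle through $x$) is not airtight, but this is a gap in the lemma as stated rather than a defect of your reconstruction, and the substantive argument matches the paper's.
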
 

\begin{proof}
	We consider a vertex $y \in r(X)$ which is adjacent to $x$. Since every vertex in $G$ belongs to a triangle, $y$ must be part of some triangle, $Y$. Because $y \in r(X)$, at least one of the other two vertices of $Y$ must be a vertex of $X$. Let us consider the two possible scenarios for triangle $Y$:
	
	\setcounter{case}{0}
	\begin{case}[$x \notin Y$]
		In this case, $Y$ must be of the form $\set{y, u, v}$ where $u$ and $v$ are some vertex. Because $y \in r(X)$ at least $u$ or $v$ must be in $X \setminus \set{x}$.
		If $u \in X \setminus (\set{x} \cup T)$, then the triangle $\set{x, y, u}$ would also exist in $G$. Because $u \notin T$, $\set{x, y, u}$ would exist in $G-T$ as well. This contradicts that $x \in r(T)$.
		Therefore, the assumption that $u \in X \setminus (\set{x} \cup T)$ must be false. Consequently, if $u \in X \setminus \set{x}$, then $u$ must be in $X \cap T$. If $T$ is removed from the graph, then the triangle $Y$ would not exist anymore. Thus, $y \in r(T)$.
	\end{case}
	
	\begin{case}[$x \in Y$]
		In this case, $Y$ must be of the form $\set{x, y, u}$ for some vertex $u$. If $u \in V \setminus (T \cup \set{x, y})$, then the triangle $Y$ would also exist in $G - T$, which contradicts $x \in r(T)$. Therefore, $u$ must be in $T$, and if $T$ is removed from the graph, then $Y$ does not exist anymore. Hence, $y \in r(T)$.
	\end{case}
	
	\medskip In both cases, if $y \in r(X)$ and $y$ is adjacent to $x$, then $y \in r(T)$. This concludes the proof.
\end{proof}

Using this lemma, we prove that the whole procedure above adds the correct vertex sets.

\begin{lemma}\label{lem:procIsCorrect}
	The procedure above will only add fast vertex sets to $\mathcal{H}$. 
\end{lemma}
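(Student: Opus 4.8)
The plan is to go case-by-case through the procedure, exactly mirroring the case analysis (Cases~\ref{case:exactAlgEasyCase}--\ref{case:exactAlgTwoVtxToDiffCorn} and their subcases), and in each case verify two things for every vertex set $W$ that is added to $\mathcal{H}$: first, that $W$ is a triangle $T'$ together with a subset of $N(T')$ (so the cost analysis of the previous section applies), and second, that the ``slow'' configurations are avoided. Recall that the slow vertex sets are precisely (a) a triangle whose three vertices each have exactly one private neighbour in $W\setminus T'$, and (b) a triangle with exactly two vertices having one private neighbour each and the third vertex having none. So in each case I need to show that whenever $W = T' \cup (W\setminus T')$ is added, either $|W\setminus T'|$ is small enough or two of the ``attached'' vertices hang off the same vertex of $T'$, or at least one vertex of $T'$ picks up two attached vertices.

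I would organise the verification as follows. Case~\ref{case:exactAlgEasyCase} is immediate: if $|r(T)|=1$ then $|W\setminus T|=1$, which is a fast set; if two vertices of $r(T)$ share a neighbour in $T$, that is explicitly a non-slow configuration. For Case~\ref{case:exactAlgThreePerCorner}, in Subcase~\ref{case:thrdVtxsDistinct} each $W_i = T_i \setminus r(T_i)$, and here I must argue, using Lemma~\ref{lem:rOfTheRisNotAdjacent}, that removing $r(T_i)$ leaves $T_i$ with enough structure --- concretely, that after deleting the $r$-vertices the remaining set is a triangle-plus-attachments with at most one attached vertex (or a shared one); the key point is that a vertex $v_{i,3}$ lying in $r(T_i)$ is, by the lemma, also in $r(T)$, hence would have been one of the three vertices of $r(T)$, contradicting distinctness, so actually $v_{i,3}\notin r(T_i)$ and each $W_i$ is the full triangle $T_i$ with a single attached vertex $v_{i,1}\in T$ --- a fast set. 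Subcase~\ref{case:commonThirdVtx} and all of Case~\ref{case:exactAlgTwoVtxToDiffCorn} are handled by the same mechanism: the sets $W = P\cup r(P)$, $W = T\cup r(T)$ (in the appropriately reduced graph) are triangle-plus-$r$-set by construction, and after the earlier deletions the relevant $r$-sets have shrunk so that the size/sharing conditions for fast sets hold --- e.g. in Subcase~\ref{case:vtxConnectToTwoOfT} the set $A=\{a,v_1,v_2\}$ already has $a$ adjacent to two vertices of the triangle, which is a fast pattern, and $W_2 = A\cup r_{G'}(A)$ needs the claim that $r_{G'}(A)$ does not reintroduce a slow pattern.

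For Case~\ref{case:adjToTwoDiffVtxs} --- the chain/cycle case --- the argument is more delicate and this is where I expect the main obstacle. Here one walks along a chain of triangles $T_0, T_1, T_2, \dots$ each with $|r(\cdot)|=2$, stopping when $|r(T_s)|\neq 2$ or the chain closes into a cycle. The claim to verify is that the vertex sets $W_1, \dots, W_{s/2}$ built by stepping back through the chain in steps of two are all fast. The subtlety is the bookkeeping: after creating $W_1 = T_s\cup r_G(T_s)$ and deleting it, the next triangle $T_{s-2}$ has exactly one vertex of its $r$-set surviving (because one of its two $r$-vertices was consumed by $W_1$, via the chain adjacency), so $W_2 = T_{s-2}\cup r_{G_1}(T_{s-2})$ is a triangle with one attached vertex --- fast. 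I would need to prove carefully, again leaning on Lemma~\ref{lem:rOfTheRisNotAdjacent} and the structure of consecutive chain triangles, that exactly one $r$-vertex is removed at each step (never zero, never two), that the parity analysis (``if $s$ is odd\ldots if $s$ is even\ldots'') correctly terminates at $T_1$ or $T_0$ (resp.\ $T_1$ or $T_2$ in the cycle case), and that the terminal set --- which can be as small as $T\cup\{b\}$ --- is also fast. The hard part is that this is essentially an induction along the chain whose invariant ("each not-yet-processed chain triangle still has its two $r$-vertices, each processed-adjacent triangle has lost exactly one") must be stated precisely and shown to be maintained by each deletion; the cycle subcase~\ref{case:chainIsCircle} additionally requires checking that the ``wrap-around'' sets $W_1 = T_s\cup(T_{s-1}\cap r_G(T_s))$ (odd) or $W_1 = T_s\cup r_G(T_s)\cup T$ (even) are themselves of the triangle-plus-subset-of-neighbours form and not slow. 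Once the invariant is in place, ruling out the two slow patterns in each $W_i$ is a short case check on how many attached vertices each triangle-vertex receives.
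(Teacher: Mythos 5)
Your overall plan---walking through the procedure case by case and checking each added set against the two slow patterns, with Lemma~\ref{lem:rOfTheRisNotAdjacent} as the main tool---is the same as the paper's, but two of your case treatments do not go through as written. In Subcase~\ref{case:thrdVtxsDistinct} you argue that $v_{i,3}\notin r(T_i)$ and conclude that each $W_i$ is ``the full triangle $T_i$ with a single attached vertex $v_{i,1}$.'' This misreads the structure: $v_{i,3}$ is a corner of $T_i$, so it is never in $r(T_i)$ by definition and nothing is gained from that observation; $v_{i,1}$ is likewise a corner of the triangle, not an attachment; and $r(T_i)$ may contain arbitrarily many vertices, all of which belong to $W_i$. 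The actual reason $W_i$ is fast is different: no $y\in r(T_i)$ can have $v_{i,1}$ as its only neighbour in $T_i$ (it would then lie in $r(T)$, contradicting the case hypothesis on $r(T)$), and by Lemma~\ref{lem:rOfTheRisNotAdjacent} no $y\in r(T_i)$ is adjacent to $v_{i,2}$ either; hence every vertex of $r(T_i)$ attaches to $v_{i,3}$, so either $|r(T_i)|\le 1$ or at least two attachments share a corner---fast in both cases. Your version establishes neither of these facts.

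Second, in Case~\ref{case:adjToTwoDiffVtxs} you correctly identify what must be shown but explicitly leave it as ``the main obstacle'': you never argue why $W_1=T_s\cup r_G(T_s)$ itself avoids the two slow patterns, and this is exactly where the lemma is non-trivial. The answer is short but specific, and your proposal lacks it. The second-slowest pattern requires $|r(T_s)|=2$ with two private attachments, which is excluded outright because $T_s$ is by definition the first triangle along the chain with $|r(T_s)|\neq 2$. The slowest pattern ($|r(T_s)|=3$ with three private attachments) cannot arise either: if some chain triangle $T_l$ had that configuration, then the corners of $T_l$ would not lie in $r(T_{l-1})$, so $|r(T_{l-1})|=1$ and the walk would already have stopped at $T_{l-1}$; hence such a $T_l$ is never selected as $T_s$. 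Without these two observations (and the analogous, easy checks for the wrap-around sets $W_1$ in Subcase~\ref{case:chainIsCircle}), the proof is incomplete at its central step, even though the surrounding bookkeeping you describe---one $r$-vertex consumed per step of two, terminal sets of size $|r|=1$---is the right invariant and matches the paper.
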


\begin{proof}
	To prove this, we investigate all the possible vertex sets that can be added to $\mathcal{H}$ by the above described procedure. We use the definitions as described in the procedure.
	
	\medskip\noindent\textbf{Case~\ref{case:exactAlgEasyCase}}: The condition for this case is that either two or more vertices in $r(T)$ are adjacent to the same vertex in $T$, or $|r(T)| = 1$. This ensures that the set $T \cup r(T)$, which is added to $\mathcal{H}$, is a fast vertex set.
	
	\medskip\noindent\textbf{Case~\ref{case:thrdVtxsDistinct}}: In this case, the vertex sets $W_i = T_i - r(T_i)$ are constructed, where $i \in \set{1,2,3}$. The vertices of $T_i$ are defined as $\set{v_{i, 1}, v_{i, 2}, v_{i, 3}}$, where $v_{i,1} \in T$, and $v_{i, 2} \in r(T)$. As mentioned in the procedure, this case is visualized on the left in Figure~\ref{fig:exactalgcase2}. We investigate $W_1$ by first considering the vertices of $r(T_1)$. These vertices cannot be adjacent to only $v_{1,1}$ in $T_1$. If they were, they would have been included in $r(T)$, which contradicts the case definition.
	By Lemma~\ref{lem:rOfTheRisNotAdjacent}, we know that for a vertex $y \in r(T_1)$ to be adjacent to $v_{1,2}$, it must also be in $r(T)$. But the other two vertices in $r(T)$ are not in $r(T_1)$. This means that any vertex in $r(T_1$) is not adjacent to $v_{1,2}$.
	Consequently, we know that $T_1$ has two vertices ($v_{1,1}$ and $v_{1,2}$) that are not adjacent with any vertex in $r(T_1)$. Therefore, $W_1 = T_1 \cup r(T_1)$ is a fast vertex set. The same reasoning applies for $W_2$ and $W_3$.
	
	\medskip\noindent\textbf{Case~\ref{case:commonThirdVtx}}: The case defines the common vertex as $u$ and the triangle, to which $u$ belongs, as $P$. The vertices of $r_G(T)$ that are adjacent to $u$, are also in $r_G(P)$ if Case~\ref{case:thrdVtxsDistinct} is not applicable. Therefore, the vertex set $W_1 = P \cup r_G(P)$ is a fast vertex set because at least two vertices of $r_G(P)$ are adjacent to the same vertex in $P$. By deleting $W_1$ from $G$, at least two vertices of $T$ are not adjacent to anything in $r_{G'}(T)$. Thus, $W_2$ is also a fast vertex set.
	
	\medskip\noindent\textbf{Case~\ref{case:vtxConnectToTwoOfT}}: The case defines $\set{v_1, v_2, v_3}$ as the vertices of $T$, and $a$ and $b$ as the vertices in $r(T)$. $A$ denotes the triangle $\set{a, v_1, v_2}$ and $B$ denotes the triangle $\set{b, v_3, u}$, where $u$ is some other vertex in $G$. We know that no vertex from $r(B)$ is only adjacent to $v_1$ of $T$. Otherwise, these vertices should have been in $r(T)$ as well, which contradicts the case definition. By Lemma~\ref{lem:rOfTheRisNotAdjacent}, we know that for a vertex $y \in r(B)$ to be adjacent to $b$, it must be in $r(T)$. But the other vertex in $r(T)$ is not in $r(B)$, so any vertex in $r(B)$ is not adjacent to $b$. Therefore, the vertices of $r(B)$ (if there are any) are only adjacent to $u$, making $W_1$ a fast vertex set. 
	After deletion of $W_1$ from $G$, we know that at least two corners ($v_1$ and $v_2$) of $A$ are not adjacent to any vertices of $r(A)$. This makes $W_2 = A \cup r(A)$ a fast vertex set. 
	
	\medskip\noindent\textbf{Case~\ref{case:chainNoCircle}}: The first vertex set created in this case is defined as $W_1 = T_s \cup r_G(T_s)$. 
	
	Consider the triangle $T_l$ that is $l$ steps away from $T_0$. Assume that $T_l \cup r(T_l)$ is a slow vertex set where $|r(T_l)| = 3$ and every vertex of $T_l$ is adjacent to one unique vertex of $r(T_l)$. Observe that the procedure will never add $T_l \cup r(T_l)$ to $\mathcal{H}$, as it will never reach triangle $T_l$. This is because the vertices of $T_l$ will not be in $r(T_{l-1})$. So, if the procedure reaches $T_{l-1}$ then $|r(T_{l-1})| = 1$, which implies that $T_s = T_{l-1}$, and $T_l$ is never reached. Therefore, in this case, the first vertex set added to $\mathcal{H}$ will not be the slow vertex set $W$ where every vertex of the triangle $T \subseteq W$ is adjacent to one unique vertex in $W \setminus T$ and $|W \setminus T| = 3$.  
	
	Also, the first vertex set, $W_1$, cannot be the second slowest vertex. This is because the procedure finds the first triangle $T_s$ that has $|r(T_s)| \neq 2$. As a result, $W_1$ can never correspond to the vertex set where two vertices of the triangle $T \subseteq W_1$ are adjacent to exactly one unique vertex in $W \setminus T$ and $|W \setminus T| = 2$. 
	
	All the following vertex sets $W_i$, where $i > 1$, are fast because the deletion of $W_{i-1}$ from $G_{i-2}$ ensured that the corresponding triangle $T_j$ has $|r_{G_{i-1}}(T_j)| = 1$. 
	
	If $s$ is \textit{odd}, we eventually end up in $T_1$ with the reduced graph $G_{s/2 - 1}$. Here, the only vertex in $r_{G_{s/2}}(T_1)$, will be a vertex of $T$ (or $T_0$). So, the final vertex set $W_{s/2}$ is fast. If $s$ is \textit{even}, we end up in $T$ (or $T_0$) with graph $G_{s/2 - 1}$. The only vertex in $r_{G_{s/2 - 1}}(T)$ is $b$ so $W_{s/2}$ is fast as well.
	
	\medskip\noindent\textbf{Case~\ref{case:chainIsCircle}}: In the case that $s$ is \textit{odd}, then $W_1$ is a fast vertex set because $|T_{s-1} \cap r_G(T_s)| = 1$. As mentioned before, all the following vertex sets $W_i$, where $i > 1$, are fast vertex sets. Eventually, we end up in $T_1$ where the only vertex in $r_{G_{s/2 - 1}}(T_1)$ is a vertex of $T$ (or $T_0$). Therefore, $W_{s/2}$ is also a fast vertex set. 
	
	If $s$ is \textit{even}, then $W_1$ contains $T$. Since two vertices of $T$ are adjacent to the same vertex of $T_s$, it follows that $W_1$ is a fast vertex set. All the following vertex sets $W_i$, where $i > 1$, are fast vertex sets as mentioned before. Eventually, we end up in $T_2$ where the only vertex in $r_{G_{s/2 - 1}}(T_2)$ is a vertex of $T_1$. So, $W_{s/2}$ is a fast vertex set as well.
\end{proof}

\begin{lemma}\label{lem:invariantHolds}
	The procedure above will maintain the invariant.
\end{lemma}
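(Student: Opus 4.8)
The invariant to be maintained is that every vertex of the current graph lies in a triangle; it holds at the outset by Claim~\ref{clm:vertexPartOfTraingle}, and the plan is to verify that each of the (one or few) graph updates performed inside any single case turns an invariant-satisfying graph into an invariant-satisfying graph, so that by induction the invariant survives the whole procedure. The argument rests on one elementary observation, which I state a little more generally than strictly needed and will call the \emph{building block}: if every vertex of a graph $G'$ lies in a triangle and $U$ is \emph{any} set of vertices of $G'$, then every vertex of $G' - (U \cup r_{G'}(U))$ again lies in a triangle. Its proof is a single line: for $u \notin U \cup r_{G'}(U)$, since $u \notin r_{G'}(U)$ it lies in some triangle $\tau$ of $G' - U$; every vertex of $\tau$ lies in the triangle $\tau$ of $G' - U$, so none of them belongs to $r_{G'}(U)$, hence $\tau$ is a triangle of $G' - (U \cup r_{G'}(U))$ through $u$.

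Given the building block, most cases follow immediately. Case~\ref{case:exactAlgEasyCase} deletes exactly $T \cup r(T)$. In Case~\ref{case:commonThirdVtx}, Case~\ref{case:vtxConnectToTwoOfT}, and in the chain subcases Case~\ref{case:chainNoCircle} and Case~\ref{case:chainIsCircle} \emph{except for the first set $W_1$ of Case~\ref{case:chainIsCircle}}, every removed set has the form $U \cup r_{G'}(U)$ with $U$ the triangle and $G'$ the graph current at that moment — including the ``final'' sets $W_{s/2}$, where the construction has arranged $|r_{G'}(U)| = 1$, so that one removes $U$ together with a single vertex, still of this form. Each such deletion is thus a single building-block step, and the induction goes through \emph{provided} one checks the bookkeeping underlying those steps: that after deleting $W_{i-1}$ the next chain triangle really does satisfy $|r_{G_{i-1}}(T_j)| = 1$, so that the single vertex one adds is indeed all of its $r$-set — which is precisely what advancing along the chain in steps of two was designed to produce. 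I would verify this by tracking, triangle by triangle, which vertices leave the $r$-sets as the chain is consumed.

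The first case that needs more than the building block is Case~\ref{case:thrdVtxsDistinct}, where the three sets $W_1, W_2, W_3$, each of the form $T_i \cup r_G(T_i)$ with the \emph{original} graph's $r$-set, are removed together. I would first pin down the structure forced by Case~\ref{case:exactAlgThreePerCorner}: each vertex of $r(T)$ is adjacent to exactly one vertex of $T$ and to pairwise distinct ones, and $T_1, T_2, T_3$ are pairwise vertex-disjoint with $T_i$ meeting neither $T_j$ nor $r_G(T_j)$ for $i \neq j$. The deletion is then of the shape $U \cup r_G(U)$ with $U = T_1 \cup T_2 \cup T_3$ \emph{provided} $r_G(T_1 \cup T_2 \cup T_3) = r_G(T_1) \cup r_G(T_2) \cup r_G(T_3)$, and this equality is the delicate point. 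The inclusion ``$\supseteq$'' is immediate; for ``$\subseteq$'' one must rule out a vertex $z$ lying outside every $T_i \cup r_G(T_i)$ whose triangles collectively hit $T_1 \cup T_2 \cup T_3$ but none of which hits a single $T_i$. I would exclude such a $z$ by a case distinction on which corner of which $T_i$ its triangles hit, using Lemma~\ref{lem:rOfTheRisNotAdjacent} — which forbids a vertex of $r(T_i) \setminus r(T)$ from being adjacent to the $r(T)$-vertex of $T_i$ — together with the perfect-matching adjacency between $r(T)$ and $T$. The same style of argument (deleting the first set does not enlarge the $r$-set of the next triangle, so the second deletion is still a building-block step) takes care of Case~\ref{case:commonThirdVtx} and Case~\ref{case:vtxConnectToTwoOfT}.

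The main obstacle is the first set $W_1$ of Case~\ref{case:chainIsCircle}, where the chain of triangles closes into a cycle. For odd $s$, $W_1 = T_s \cup (T_{s-1} \cap r_G(T_s))$ contains only one of the two vertices of $r_G(T_s)$, so I must show the other vertex of $r_G(T_s)$ still lies in a triangle after $W_1$ is removed; for even $s$, $W_1 = T_s \cup r_G(T_s) \cup T$ additionally deletes the whole starting triangle $T$, and I must show this extra deletion is harmless — in particular that the vertices $a, b$ of $r(T)$, whose every triangle uses $T$, are themselves inside $W_1$ (I expect them to be forced into $r_G(T_s)$). I would deal with both by a dedicated analysis of the cyclic chain, establishing that every chain triangle $T_0, \ldots, T_s$ has $|r(\cdot)| = 2$ all the way around, that consecutive chain triangles share exactly one vertex, and — the crux — that $r(T_i)$ consists precisely of the two ``link'' vertices through which the chain enters and leaves $T_i$, with no stray vertex; this last point is exactly what prevents the ``straddling vertex'' phenomenon of the previous paragraph from recurring here. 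With that description in hand, the vertex of $r_G(T_s)$ omitted from $W_1$ sits in a neighbouring chain triangle whose other two vertices survive (Lemma~\ref{lem:rOfTheRisNotAdjacent} rules out that its triangle could have slipped entirely into the deleted part), and for even $s$ the vertices of $T$ beyond its link vertices are exactly $a$ or $b$, which lie in $r_G(T_s)$ and are removed anyway. After $W_1$ every remaining set of Case~\ref{case:chainIsCircle} is again of building-block form, so the rest follows as in the second paragraph. Establishing the cyclic description precisely — that $|r(T_i)| = 2$ persists around the whole cycle and that $r(T_i)$ is exactly the link pair — is where I expect the bulk of the work to lie.
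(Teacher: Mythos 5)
Your building block is correct, and it is exactly the justification the paper leaves implicit when it asserts that $G-(U\cup r(U))$ satisfies the invariant ``by the definition of $r$''; your observation that Case~\ref{case:thrdVtxsDistinct} removes three such sets simultaneously, and therefore needs $r_G(T_1\cup T_2\cup T_3)=\bigcup_i r_G(T_i)$, is a legitimate point that the paper's own proof silently skips. Up to there you are, if anything, more careful than the paper.

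The genuine problem is your treatment of Case~\ref{case:chainIsCircle}, and it stems from insisting that the invariant hold after \emph{every individual deletion} rather than only after each case of the procedure has been fully processed. For odd $s$ you claim that the vertex of $r_G(T_s)$ omitted from $W_1$ ``sits in a neighbouring chain triangle whose other two vertices survive''. This cannot happen: that vertex belongs to $r_G(T_s)$, so by definition every triangle through it meets $T_s$, and $T_s\subseteq W_1$; once $W_1$ is deleted the vertex has no triangle at all. The same defect affects your even-$s$ discussion, where you ``expect'' $a$ and $b$ to be forced into $r_G(T_s)$ --- in general $a$ lies in $T_1$, far from $T_s$, and the paper explicitly identifies $a$ as the vertex that is left triangle-less after $W_1$ is removed. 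Consequently your building block is not applicable to the subsequent sets $W_2,\dots$ of this case, since its hypothesis (every vertex lies in a triangle) already fails in $G_1$. The paper's proof resolves this differently: it concedes that exactly one vertex is temporarily orphaned inside Case~\ref{case:chainIsCircle} and shows that this vertex is absorbed into the final set $W_{s/2}$ of the same case (it lies in $r(T_1)$, respectively $r(T_2)$), so the invariant is restored by the time the outer loop selects its next triangle. Your plan needs to be restructured along those lines; as written, the step you defer to ``a dedicated analysis of the cyclic chain'' is aimed at proving something false.
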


\begin{proof}
	Case~\ref{case:exactAlgEasyCase}, Case~\ref{case:exactAlgThreePerCorner}, Case~\ref{case:vtxConnectToTwoOfT} and Case~\ref{case:chainNoCircle} of the procedure described above, all create the vertex sets $W_i$ by the union of some vertex set $U$ together with $r(U)$. By the definition of $r$, we know that $G - (U \cup r(U))$ is a graph where all the vertices are part of some triangle. So, the lemma holds for these cases.
	
	For Case~\ref{case:chainIsCircle} it is a bit different. If $s$ is odd, then $W_1 = T_s \cup \big(T_{s-1} \cap r_G(T_s) \big)$. After deleting $W_1$ from the graph $G$, there will be a vertex $v$ in the graph that is not part of a triangle anymore. This vertex $v$ was before deletion part of the triangle $T$. As the procedure describes, this vertex $v$ is added in the final vertex set $W_{s/2}$, because $v$ is also part of $r(T_1)$. 
	
	If $s$ is even, the same behaviour occurs. Because all the vertices of $T$ are in $W_1$, there is a vertex of $T_1$ that after deletion, is not part of a triangle (In Figure \ref{fig:exactalgcase3} denoted as vertex $a$). But this vertex is put in de final vertex set $W_{s/2}$, because this vertex is also part of $r(T_2)$.
\end{proof}

\begin{lemma}
	Let $G$ be a graph where every vertex is part of a triangle. We can decompose the graph into disjoint vertex set family $\mathcal{H} = \set{W_1, ..., W_s}$, where no $W_i$ is a slow vertex set.
\end{lemma}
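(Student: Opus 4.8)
\noindent\emph{Proof proposal.}
The plan is to assemble this statement from the machinery already developed in this section: the explicit decomposition procedure together with Lemma~\ref{lem:procIsCorrect} and Lemma~\ref{lem:invariantHolds}. Concretely, I would run the procedure on $G$ and verify three things in turn — that it is well defined and terminates, that the sets it outputs form a disjoint partition of $V$, and that none of them is slow.

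For well-definedness I would first argue that the case analysis is exhaustive for whatever triangle $T$ is picked from the current graph. Every vertex of $r(T)$ is adjacent to some vertex of $T$, since it lies in a triangle of $G$ that is destroyed by removing $T$ and hence shares an edge with $T$. Consequently, if $|r(T)|\ge 4$ then by the pigeonhole principle two vertices of $r(T)$ share a neighbour in $T$, so Case~\ref{case:exactAlgEasyCase} applies; the cases $|r(T)|\le 1$ are likewise covered by Case~\ref{case:exactAlgEasyCase} (with $W=T$ when $r(T)=\emptyset$, which is trivially not slow); and $|r(T)|\in\{2,3\}$ are split according to whether two vertices of $r(T)$ have a common neighbour in $T$ (Case~\ref{case:exactAlgEasyCase}) or each is attached to a distinct vertex of $T$ (Cases~\ref{case:exactAlgThreePerCorner} and~\ref{case:exactAlgTwoVtxToDiffCorn}, with their subcases, after relabelling the symmetric vertices of $T$). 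So every configuration is handled. Since each round deletes at least the three vertices of the chosen triangle $T$, the vertex count strictly decreases and the procedure halts after finitely many rounds; and by Lemma~\ref{lem:invariantHolds} the invariant ``every vertex is part of a triangle'' is restored after each round, so — the hypothesis of the theorem being exactly this invariant at the start — there is always a triangle available to pick, and the procedure continues until the working graph is empty.

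Next, disjointness and coverage. In each round the vertex sets $W_i$ added to $\mathcal{H}$ consist precisely of vertices that are then deleted from the working graph, and all later rounds operate on the reduced graph; hence the members of $\mathcal{H}$ are pairwise disjoint. Because the procedure stops only once the working graph is empty, every vertex of $G$ has been placed into exactly one $W_i$, so $\mathcal{H}$ partitions $V$. The only delicate point is that in the chain subcases of Case~\ref{case:adjToTwoDiffVtxs} a vertex may temporarily cease to lie in any triangle; but Lemma~\ref{lem:invariantHolds} shows that such a vertex is absorbed into the final vertex set of the chain, so nothing is stranded outside some $W_i$. Finally, Lemma~\ref{lem:procIsCorrect} states that the procedure only ever adds fast vertex sets, which is exactly the assertion that no $W_i$ is slow, completing the argument.

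I expect the only real obstacle to be bookkeeping rather than any deep step — the substantive work lives in Lemma~\ref{lem:procIsCorrect} and Lemma~\ref{lem:invariantHolds}. In particular one must make sure the case split is genuinely exhaustive (the $|r(T)|\ge 4$ pigeonhole observation is left implicit in the case headers) and that the ``delete-and-recurse'' viewpoint really does produce a partition despite the chained, multi-set constructions in Cases~\ref{case:exactAlgThreePerCorner}–\ref{case:exactAlgTwoVtxToDiffCorn}; both of these reduce to the invariant being reinstated after every round, which is Lemma~\ref{lem:invariantHolds}.
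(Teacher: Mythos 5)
Your proposal is correct and follows essentially the same route as the paper: the paper's own proof is a three-line argument that invokes Lemma~\ref{lem:invariantHolds} to guarantee the procedure can be iterated until the graph is empty and Lemma~\ref{lem:procIsCorrect} to guarantee only fast sets are ever added. You simply make explicit some bookkeeping the paper leaves implicit (exhaustiveness of the case split via pigeonhole for $|r(T)|\ge 4$, termination, and disjointness of the output), which is a faithful elaboration rather than a different approach.
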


\begin{proof}
	By Lemma \ref{lem:invariantHolds}, we know that the above procedure can be applied repeatedly until the graph $G$ becomes empty. Lemma \ref{lem:procIsCorrect} proves that all vertex sets added to $\mathcal{H}$ by the above described procedure are fast vertex sets. Because $\mathcal{H}$ is initially empty, we can conclude that every vertex set in $\mathcal{H}$ is a fast vertex set.
\end{proof}

With the algorithm above we have shown how we can divide the graph into vertex sets that are all `fast'. Clearly, the above procedure is polynomial, and therefore the time complexity simply shifts to the next slowest vertex set. This is the vertex set $W$ where every vertex of the triangle $T \subseteq W$ is adjacent to two unique vertices in $W - T$. So, the vertex cost of $W$ is $(3 \cdot 1.3645^4)^{1/9} \approx 1.2972$. This all together concludes,

\begin{theorem}
	There is a $\OStar{1.2972^n}$ algorithm for \scs.
\end{theorem}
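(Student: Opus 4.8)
The plan is to assemble the ingredients of this section into one algorithm and then read off the running time from the worst fast vertex set. Given the annotated input $G=(V,E,p)$, the algorithm first tests in polynomial time whether some vertex $v$ has a stable neighbourhood; if so it reports ``yes'' (since $N(v)$ separates $v$ from $V\setminus N[v]$) and halts. Otherwise Claim~\ref{clm:vertexPartOfTraingle} holds, so we run the polynomial-time decomposition procedure of the ``Smarter vertex sets'' subsection to obtain a family $\mathcal{H}$ of pairwise disjoint vertex sets covering $V$, every one of which is a fast vertex set (this is exactly what the preceding decomposition lemma, built on Lemmas~\ref{lem:procIsCorrect} and~\ref{lem:invariantHolds}, guarantees). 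We then branch: to each $W\in\mathcal{H}$ we apply Branching Rule~\ref{br:closedNeighborhoodrule} or Branching Rule~\ref{br:triangleRule}, whichever is applicable. In each leaf of the resulting branching tree we convert the annotated graph to a \problemfont{(3,2)-CSP} instance as in Section~\ref{sec:annTo32CSP}, delete every variable coming from a vertex with at most two allowed labels using Lemma~\ref{lem:decreaseValuesBE}, handle the non-emptiness constraint by the $O(n)$-fold guessing of a distinguished vertex (labelled $A$, or $S$ together with one neighbour labelled $A$) described in Section~\ref{sec:annTo32CSP}, and call the Beigel--Eppstein solver.

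For correctness I would argue both directions. If $G$ has a stable cutset, then some choice in the non-emptiness guessing is consistent with it, and the branching rules are exhaustive on the relevant vertices: Branching Rule~\ref{br:closedNeighborhoodrule} tries all three labels of $v$, while Branching Rule~\ref{br:triangleRule} tries both admissible label patterns of the triangle $T$ --- since $A$- and $B$-labelled vertices cannot be adjacent, every vertex of a triangle lies in $\{A,S\}$ or every vertex lies in $\{B,S\}$. Hence some leaf is an annotated graph consistent with the stable cutset, and by Lemma~\ref{lem:annGraphRules} the constraints of its \problemfont{(3,2)-CSP} encoding are precisely the adjacency constraints a solution must satisfy, so the solver accepts; conversely an accepting leaf yields a labelling meeting the adjacency and non-emptiness conditions, i.e. a stable cutset of $G$. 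The variable deletions of Lemma~\ref{lem:decreaseValuesBE} preserve equivalence, so no solution is lost.

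For the running time I would use the vertex-cost accounting already set up in this section: the total time is $\OStar{c^n}$ where $c$ is the maximum, over $W\in\mathcal{H}$, of the vertex cost of $W$, and the extra $O(n)$ factor from the non-emptiness guessing is absorbed by $\OStar{\cdot}$. Since the decomposition outputs only fast vertex sets, it remains to maximise the vertex cost over fast configurations. When Branching Rule~\ref{br:closedNeighborhoodrule} applies with $x=|N(v)\cap(W\setminus T)|\ge 2$, the vertex $v$, the two other vertices of $T$, and the $x$ neighbours of $v$ in $W\setminus T$ are all reduced to at most two labels in each of the three branches and hence eliminated from the \problemfont{(3,2)-CSP} instance, leaving $|W|-3-x$ ordinary variables; the cost is $\bigl(3\cdot 1.3645^{\,|W|-3-x}\bigr)^{1/|W|}$, which (given $x\ge 2$ and the exclusion of the two slow configurations, whose costs were already bounded by $\approx 1.3112$) is largest for the fast set in which each of the three triangle vertices has exactly two private neighbours in $W\setminus T$, so that $|W|=9$, $x=2$, four ordinary variables remain, and the cost equals $\bigl(3\cdot 1.3645^{4}\bigr)^{1/9}\approx 1.2972$. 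Checking that every other fast configuration --- over all ways the triangle vertices can privately own or share neighbours in $W\setminus T$, and over both branching rules --- has cost at most this value is the step I expect to be the main obstacle: it is not a single inequality but a short case analysis of the cost function $\bigl(3\cdot 1.3645^{\,|W|-3-x}\bigr)^{1/|W|}$ in the configuration parameters. Once that is done, $c=1.2972$ and the theorem follows.
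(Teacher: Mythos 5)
Your proposal is correct and follows essentially the same route as the paper: run the smarter decomposition so that only fast vertex sets occur, branch per set, reduce to \problemfont{(3,2)-CSP}, and read off the bound from the worst remaining configuration, namely the set where each triangle vertex has two private neighbours, with cost $(3\cdot 1.3645^{4})^{1/9}\approx 1.2972$. The residual case analysis you flag as the main obstacle is also left implicit in the paper; it reduces to observing that the Rule~\ref{br:closedNeighborhoodrule} cost $(3\cdot 1.3645^{|W|-3-x})^{1/|W|}$ is maximised at $|W|=3+3x$ and is decreasing in $x$ for $x\geq 2$, while the surviving Rule~\ref{br:triangleRule} configurations cost at most $(2\cdot 1.3645^{2})^{1/5}<1.2972$ is false --- that one is excluded as the second slow set, so the surviving ones cost at most $(2\cdot 1.3645)^{1/4}\approx 1.2853$.
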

	\section{Minimum degree $c \cdot n$, where $c < 1$}\label{sec:minDegCN}
We consider the special case of \scs where the input graph is required to have a minimum degree $\delta = c \cdot n$, for a constant $c < 1$.
First, we establish an upper bound on the minimum degree of a graph that guarantees it does not contain a stable cutset.
This is followed by a polynomial-time algorithm 
that solves 
\scs on graphs with minimum degree $\delta \geq \tfrac{1}{2}n$. Finally, a similar kernelisation algorithm is presented for $\delta = \tfrac{1}{2}n - k$.

\subsection{Upper bound for minimum degree}
Chen and Yu \cite{ANoteOnFragileGraphs} gave an upper bound for the number of edges for which it is certain that a stable cutset exists in the graph. Here, we will do the opposite with a different parameter, investigating the minimum degree for which a graph is guaranteed \emph{not} to contain a stable cutset. To achieve this, we construct a graph that both contains a stable cutset and has the largest possible minimum degree.

\begin{theorem}\label{thrm:maxMinDegSCS}
	A graph $G$ with minimum degree $\delta > \tfrac{2}{3}(n - 1)$ does not contain a stable cutset.
\end{theorem}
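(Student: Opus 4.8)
The plan is to argue by contradiction with a short degree count. Suppose $G$ has a stable cutset $S$. Since $G$ is connected, $S \neq \emptyset$, and since $G - S$ is disconnected we may split its connected components into two nonempty groups, obtaining a partition $V(G) = A \cup B \cup S$ in which $A$ and $B$ are nonempty, there is no edge between $A$ and $B$, and $S$ is an independent set. Write $a = |A|$, $b = |B|$, $s = |S|$, so that $a + b + s = n$ and $a, b \geq 1$.

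Next I would bound the degree of a vertex in each part. A vertex $v \in A$ has all of its neighbours inside $A \cup S$ (no edge joins $A$ to $B$), so $\deg(v) \leq (a-1) + s$; symmetrically every $v \in B$ satisfies $\deg(v) \leq (b-1) + s$; and since $S$ is independent, every $v \in S$ has all of its neighbours in $A \cup B$, so $\deg(v) \leq a + b$. Because $A$, $B$ and $S$ are each nonempty, we may pick such a vertex in each part, and each of the three upper bounds is therefore at least $\delta$. Adding the three inequalities gives $3\delta \leq (a-1+s) + (b-1+s) + (a+b) = 2(a+b+s) - 2 = 2(n-1)$, hence $\delta \leq \tfrac{2}{3}(n-1)$, contradicting the hypothesis $\delta > \tfrac{2}{3}(n-1)$.

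I do not expect a genuine obstacle here; the argument is a one-line averaging/counting bound once the partition is set up. The only points requiring a little care are: justifying that all three parts $A$, $B$, $S$ are nonempty (so the minimum-degree bound can actually be invoked inside each), and observing that the bound is best possible — for $n \equiv 1 \pmod 3$ one can realise $\delta = \tfrac{2}{3}(n-1)$ with a stable cutset by letting $A$ and $B$ be disjoint cliques of size $(n-1)/3$ and $S$ an independent set of size $(n-1)/3 + 1$, with every vertex of $S$ joined to every vertex of $A \cup B$; this matches the construction alluded to just before the theorem statement and shows the threshold in Theorem~\ref{thrm:maxMinDegSCS} cannot be lowered.
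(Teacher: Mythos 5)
Your proof is correct and follows essentially the same route as the paper: the same partition $V = A \cup B \cup S$, the same three degree upper bounds, and the same arithmetic combining them with $|A|+|B|+|S|=n$ to conclude $\delta \leq \tfrac{2}{3}(n-1)$. Your direct summation of the three inequalities is in fact a slightly cleaner way to organise the paper's substitution argument, and the tightness example you append matches the extremal construction the paper illustrates in its figure.
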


\begin{proof}
	Consider the graph $G$ that contains a stable cutset $S$. Let $A$ and $B$ denote the vertex sets that are separated. We want to maximize the degree of every vertex in the graph. Vertices in $A$ can only be adjacent to other vertices in $A$ and vertices in $S$. Therefore, we know that the degree of any vertex $v$ in $A$ is tightly upper bounded by $deg(v) \leq |A| - 1 + |S|$. This same reasoning can be applied for any vertex $u$ in $B$: $deg(u) \leq |B| - 1 + |S|$. Because $S$ is stable, the degree of any vertex $w \in S$ is upper bounded by $deg(w) \leq |A|+|B|$. A graph where the degree of every vertex is maximized can be seen in Figure \ref{fig:maxmindegscs}.
	
	If any vertex does not comply with the degree constraint of his vertex set, then the graph does not contain a stable cutset. The minimum degree of a graph forces all the degrees of the vertices to be a specific value. Therefore, if
	\begin{align*} 
		\delta &>  |A| + |B| \\ 
		\delta &>  |A| + |S| - 1 \\
		\delta &>  |B| + |S| - 1 
	\end{align*}
	then we know that $G$ does not contain a stable cutset. The three vertex sets cover all vertices of $G$ and therefore $|A| + |B| + |S| = n$. If we combine this fact with the second and third statement above, then we get $|B| > n - \delta - 1$ and $|A| > n - \delta - 1$, respectively. If we fill this in the first statement above we retrieve $\delta > \tfrac{2}{3}(n - 1)$. Therefore, if the minimum degree in $G$ is larger than $\tfrac{2}{3}(n - 1)$, then we know that $G$ does not contain a stable cutset.
\end{proof}

\begin{corollary}
	A graph that contains a \scs has minimum degree $\delta \leq \tfrac{2}{3}(n - 1)$
\end{corollary}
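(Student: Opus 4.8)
The plan is to obtain the corollary as the contrapositive of Theorem~\ref{thrm:maxMinDegSCS}. Suppose $G$ is a graph that contains a stable cutset. I want to conclude $\delta \leq \tfrac{2}{3}(n-1)$, so I argue by contradiction: assume instead that $\delta > \tfrac{2}{3}(n-1)$. Then Theorem~\ref{thrm:maxMinDegSCS} applies directly to $G$ and tells us that $G$ does \emph{not} contain a stable cutset, contradicting our hypothesis. Hence the assumption was false and $\delta \leq \tfrac{2}{3}(n-1)$.

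The only subtlety worth checking is the boundary case, i.e.\ that the strict/non-strict inequalities line up correctly. Theorem~\ref{thrm:maxMinDegSCS} is stated with a strict inequality in the hypothesis ($\delta > \tfrac{2}{3}(n-1)$ forbids a stable cutset), so its contrapositive is precisely ``$G$ has a stable cutset $\Rightarrow \delta \not> \tfrac{2}{3}(n-1)$'', which is the same as $\delta \leq \tfrac{2}{3}(n-1)$. No separate treatment of equality is needed, and no new combinatorial argument is required.

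There is essentially no main obstacle here: the corollary is a direct logical restatement of the already-proved theorem, so the ``hard part'' (the extremal construction bounding $|A|$, $|B|$, $|S|$ and the arithmetic combining $|A|+|B|+|S| = n$ with the degree inequalities) has all been carried out in the proof of Theorem~\ref{thrm:maxMinDegSCS}. I would keep the proof to one or two sentences invoking that theorem.
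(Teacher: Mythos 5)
Your proposal is correct and matches the paper's intent exactly: the paper states this corollary immediately after Theorem~\ref{thrm:maxMinDegSCS} without a separate proof, precisely because it is the contrapositive of that theorem, which is the one-line argument you give. Your check that the strict and non-strict inequalities line up is the only point of care, and you handle it correctly.
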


\begin{figure}
	\centering
	\includegraphics[width=0.65\linewidth]{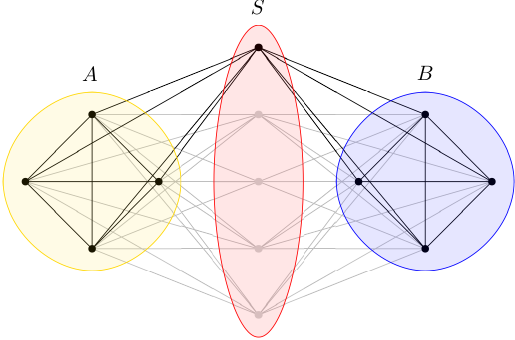}
	\caption{Example of a graph that contains a stable cutset and where the minimum degree is maximized. Not all the edges are coloured black for the clarity of the figure.}
	\label{fig:maxmindegscs}
\end{figure}

\subsection{Polynomial-time algorithm for $\delta \geq \tfrac{1}{2}n$}\label{sec:polyAlgForMinDegHalfN}
Consider the graph $G = (V,E, p)$ as described in Section \ref{sec:scsAsAnnot} with a minimum degree of $\delta \geq \tfrac{1}{2}n$. We present in this section an algorithm that in polynomial time decides whether $G$ contains a stable cutset.

The algorithm considers all possible non-adjacent vertex pairs in $G$, which can be enumerated in $O(n^2)$ time. Suppose that the algorithm is currently investigating non-adjacent vertex pair $(u, v) \in V \times V$. We know by Lemma \ref{lem:commonNeighMinDegHalfN} that $N(u) \cap N(v) \neq \emptyset$.

\begin{lemma}\label{lem:commonNeighMinDegHalfN}
	Consider a graph $G = (V, E)$ with minimum degree $\delta \geq \tfrac{1}{2}n$. Every vertex pair $(u, v) \in V \times V$ that is non-adjacent has at least 1 common neighbour.
\end{lemma}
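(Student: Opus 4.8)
The plan is to run a straightforward pigeonhole (inclusion–exclusion) argument on the neighbourhoods of $u$ and $v$. The key observation is that when $u$ and $v$ are non-adjacent in a simple graph, neither $u$ nor $v$ can belong to $N(u)$ or to $N(v)$: a vertex is never its own neighbour, and $u \notin N(v)$, $v \notin N(u)$ precisely because the pair is non-adjacent. Hence both $N(u)$ and $N(v)$ are subsets of $V \setminus \set{u,v}$, a set of size $n - 2$.

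First I would record the size bounds $|N(u)| \ge \delta \ge \tfrac12 n$ and $|N(v)| \ge \delta \ge \tfrac12 n$, which come directly from the minimum-degree hypothesis. Then I would argue by contradiction: suppose $N(u) \cap N(v) = \emptyset$. In that case $|N(u) \cup N(v)| = |N(u)| + |N(v)| \ge n$. But $N(u) \cup N(v) \subseteq V \setminus \set{u,v}$, so $|N(u) \cup N(v)| \le n - 2 < n$, a contradiction. Therefore $N(u) \cap N(v) \neq \emptyset$, i.e., $u$ and $v$ share at least one common neighbour.

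There is essentially no obstacle here; the only point requiring a moment's care is justifying that $\set{u,v} \cap (N(u) \cup N(v)) = \emptyset$, which is exactly where the non-adjacency hypothesis is used — without it, $u$ and $v$ could lie in each other's neighbourhoods and the counting would only give $|N(u) \cup N(v)| \le n$, which is not a contradiction. (If one wanted the slightly stronger quantitative statement, the same argument shows $|N(u) \cap N(v)| \ge 2\delta - (n-2) \ge 2$, but the weaker conclusion stated in the lemma is all that is needed in the sequel.)
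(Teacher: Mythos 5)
Your proof is correct and follows essentially the same pigeonhole argument as the paper: both neighbourhoods live in the $n-2$ vertices of $V \setminus \set{u,v}$, so two sets of size at least $\tfrac{1}{2}n$ must intersect. Your version is slightly cleaner in that it avoids the paper's unnecessary case split on the parity of $n$, and your closing remark that the argument actually yields $|N(u) \cap N(v)| \ge 2\delta - (n-2) \ge 2$ is accurate.
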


\begin{proof}
    Let $G$ and the vertex pair $(u,v) \in V\times V$ be as in the lemma above.
    Both $u$ and $v$ are adjacent to at least $\tfrac{1}{2}n$ other vertices if $n$ is even and at least $\tfrac{1}{2}(n-1)$ other vertices
    if $n$ is odd. There are $n-2$ vertices left which are possible neighbours for $u$ and $v$. Thus, by the pigeon hole principle, there are at least 2 vertices in $V \setminus \set{u, v}$ that are adjacent to $u$ and $v$ if $n$ is even. And by the same principle, there is at least one common neighbour for $u$ and $v$ if $n$ is odd.
\end{proof}

We label $p(u) = \set{A}$ and $p(v) = \set{B}$. As a result, by Lemma~\ref{lem:annGraphRules}, all the vertices in $N(u) \cap N(v)$ can be labelled with $\set{S}$. We define the numerical variable $r = |N(u) \cap N(v)|$. Also, we define the vertex set $X = N(u) \setminus \big(N(u) \cap N(v)\big)$, and $Y = N(v) \setminus \big(N(u) \cap N(v)\big)$. Finally, we have the remaining vertices set $F = V - X - Y - (N(u) \cap N(v))$. Figure \ref{fig:polyalgmindeghalfn} gives a visual presentation of the vertex sets.

\begin{claim}\label{clm:polyAlgSizesSets}
	Given the above definitions: $|X| \geq \tfrac{1}{2}n - r$, $|Y| \geq \tfrac{1}{2}n - r$ and $|F| \leq r - 2$	
\end{claim}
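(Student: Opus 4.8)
The plan is to bound each of the three sets by simple counting using the minimum-degree hypothesis and the disjoint partition of $V$. First I would record that the four sets $N(u)\cap N(v)$, $X$, $Y$, and $F$, together with the two vertices $u$ and $v$ themselves, partition $V$; indeed $X$ and $Y$ are disjoint (a vertex in both would lie in $N(u)\cap N(v)$, which is excluded), and $F$ is defined as everything left over. Also note $u\notin N(u)$ and $v\notin N(v)$, and since $u$ and $v$ are non-adjacent, $u\notin N(v)$ and $v\notin N(u)$, so neither $u$ nor $v$ lies in $X\cup Y\cup (N(u)\cap N(v))$; they sit inside $F$ only if we are not careful, so I would either explicitly remove them or note $F = V\setminus(X\cup Y\cup(N(u)\cap N(v))\cup\{u,v\})$. (A minor bookkeeping point: the statement writes $F = V - X - Y - (N(u)\cap N(v))$, which would also contain $u,v$; I would flag that the intended $F$ excludes $u$ and $v$, or absorb the resulting $\pm 2$ into the inequality, which is why the bound on $|F|$ is $r-2$ and not $r$.)

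For the lower bounds on $|X|$ and $|Y|$: by definition $N(u)$ is the disjoint union of $X$ and $N(u)\cap N(v)$, so $|X| = |N(u)| - r \geq \delta - r \geq \tfrac{1}{2}n - r$, using $\deg(u)\geq\delta\geq\tfrac12 n$. The argument for $|Y|$ is identical with $v$ in place of $u$. For the upper bound on $|F|$: summing the sizes of the parts of the partition gives
\begin{align*}
	n &= |X| + |Y| + r + |F| + 2,
\end{align*}
hence $|F| = n - 2 - r - |X| - |Y|$. Substituting the two lower bounds $|X|,|Y| \geq \tfrac12 n - r$ yields $|F| \leq n - 2 - r - 2(\tfrac12 n - r) = r - 2$, as claimed.

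I do not expect any real obstacle here; the only thing to be careful about is the precise membership of $u$ and $v$ and making sure the partition really is a partition (disjointness of $X$ and $Y$, and that $N(u)\cap N(v)$ is disjoint from both), together with keeping the parity of $n$ from causing trouble — but note we only need $\delta\geq\tfrac12 n$ as a clean bound, so even when $n$ is odd the inequality $|N(u)|\geq\tfrac12 n$ is what we use directly, and no appeal to Lemma~\ref{lem:commonNeighMinDegHalfN} (which handled the odd case separately) is needed for this particular claim. I would present the three bounds in the order $|X|$, $|Y|$, $|F|$ exactly as stated, deriving the first two in one line each and the third by the counting identity above.
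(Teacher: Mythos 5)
Your proof is correct and follows essentially the same route as the paper's: bound $|X|$ and $|Y|$ directly from the minimum-degree hypothesis via $|X| = |N(u)| - r \geq \tfrac12 n - r$, then obtain $|F| \leq r-2$ from the counting identity $|F| = n - 2 - r - |X| - |Y|$. Your extra bookkeeping about $u,v$ being excluded from $F$ (which explains the $-2$) is a reasonable clarification of a point the paper leaves implicit, but it does not change the argument.
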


\begin{proof}
	The vertex set $X$ is defined as the neighbours of $u$ minus the common neighbours of $u$ and $v$. We know by the minimum degree precondition that $u$ has at least $\tfrac{1}{2}n$ neighbours. We defined $r$ as the cardinality of the common neighbours, and therefore we know that $|X| \geq \tfrac{1}{2}n - r$. The same reasoning can be used for $Y$. 
	
	$F$ is defined as the remaining vertices. To calculate the cardinality of $F$, we take all the vertices and subtract the sizes of the other vertex sets. So, $|F| = n - 2 - r - |X| - |Y| \leq n - 2 - r - (\tfrac{1}{2}n - r) - (\tfrac{1}{2}n - r) = r - 2$.
\end{proof}

The algorithm now applies the rules of Lemma~\ref{lem:annGraphRules}
until it is no longer possible. If at some point there exists a vertex $v$ with $p(v) = \emptyset$, then the algorithm stops and continues to the next pair of vertices $(u', v')$. If the algorithm does not stop and the rules of Lemma~\ref{lem:annGraphRules} cannot be applied anymore, then the following claims hold: 

\begin{figure}
	\centering
	\includegraphics[width=1\linewidth]{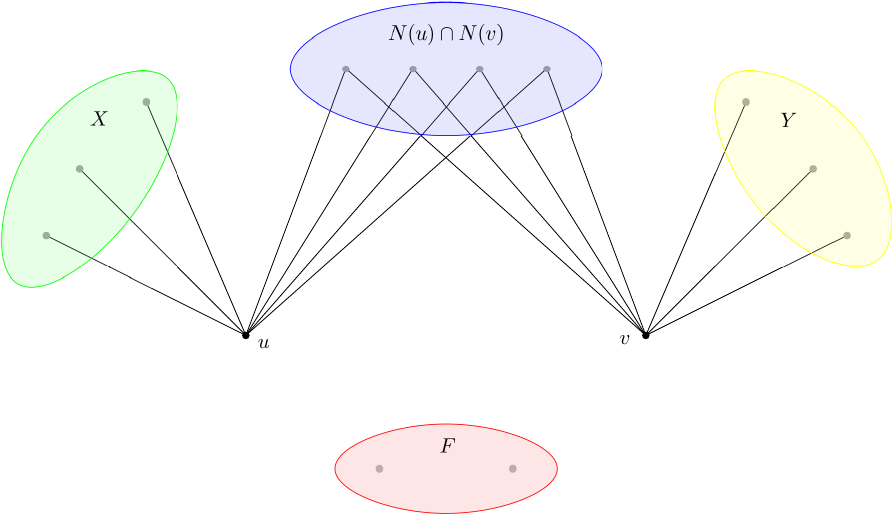}
	\caption{The division of the graph into vertex sets after choosing a non-adjacent pair $(u,v)$. Note, only the edges of $u$ and $v$ are drawn.}
	\label{fig:polyalgmindeghalfn}
\end{figure}

\begin{claim}\label{clm:polyAlgHowManyLabeled}
	At least $\tfrac{1}{2}n + 2$ vertices are annotated with one possible label.
\end{claim}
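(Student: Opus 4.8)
The plan is to exhibit an explicit set of at least $\tfrac12 n + 2$ vertices that are already forced to a singleton label at the fixpoint. First I would record the vertices labelled directly by the construction: $u$ has $p(u) = \set{A}$, $v$ has $p(v) = \set{B}$, and every vertex of $C := N(u)\cap N(v)$ has $p = \set{S}$, since such a vertex is adjacent both to $u$ (which removes $B$ by Lemma~\ref{lem:annGraphRules}(i)) and to $v$ (which removes $A$). Moreover, because the algorithm has not stopped, $C$ must be an independent set: two adjacent vertices of $C$ would both carry label $\set{S}$, contradicting Lemma~\ref{lem:annGraphRules}(iii). So far this gives $2 + r$ singleton-labelled vertices, where $r = |C| = |N(u)\cap N(v)|$.

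Next I would add the vertices of $X$ and $Y$ that have a neighbour in $C$. Let $X_1 \subseteq X$ be the vertices of $X$ adjacent to some vertex of $C$, and define $Y_1 \subseteq Y$ analogously. Any $x \in X_1$ has $B \notin p(x)$ (as a neighbour of $u$) and $S \notin p(x)$ (as a neighbour of an $\set{S}$-labelled vertex of $C$), so $p(x) = \set{A}$; symmetrically $p(y) = \set{B}$ for every $y \in Y_1$. Since the sets $\set{u}$, $\set{v}$, $C$, $X$, $Y$ are pairwise disjoint, the number of singleton-labelled vertices is at least $2 + r + |X_1| + |Y_1|$.

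The heart of the argument is a lower bound on $|X_1| + |Y_1|$ by double counting the edges between $C$ and $X \cup Y$. On one hand, each $w \in C$ is adjacent to $u$ and $v$, has no neighbour in $C$ (independence), and has $\deg(w) \ge \tfrac12 n$, so it has at least $\tfrac12 n - |F| - 2$ neighbours in $X \cup Y$; hence there are at least $r\bigl(\tfrac12 n - |F| - 2\bigr)$ such edges. On the other hand, only vertices of $X_1 \cup Y_1$ have a neighbour in $C$, and each of them has at most $|C| = r$ of them, so the number of these edges is at most $(|X_1| + |Y_1|)\cdot r$. Since $r \ge 1$ by Lemma~\ref{lem:commonNeighMinDegHalfN}, dividing by $r$ gives $|X_1| + |Y_1| \ge \tfrac12 n - |F| - 2$ (and this is vacuously true when the right-hand side is nonpositive, so no case split on the size of $r$ is needed). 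Combining everything, the count of singleton-labelled vertices is at least $2 + r + \tfrac12 n - |F| - 2 = \tfrac12 n + (r - |F|) \ge \tfrac12 n + 2$, where the final step uses $|F| \le r - 2$ from Claim~\ref{clm:polyAlgSizesSets}.

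The main obstacle is the double-counting step: one must ensure the per-vertex degree bound from the $C$-side is not wasted — in particular that $C$ really is independent (this is exactly where the hypothesis "the algorithm does not stop" is used) and that every edge leaving $C$ into $X\cup Y$ indeed lands on a vertex we are counting. Everything else is routine bookkeeping of disjointness and the substitution $|F| \le r - 2$.
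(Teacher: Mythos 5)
Your proof is correct and follows essentially the same route as the paper's: the paper simply fixes a single $s \in N(u)\cap N(v)$ and counts its at least $\tfrac{1}{2}n - 2 - |F| \geq \tfrac{1}{2}n - r$ neighbours in $X \cup Y$, which already lie in your $X_1 \cup Y_1$, so your double count over all of $C$ yields the same bound with slightly more machinery. Your explicit remark that $N(u)\cap N(v)$ must be independent because the algorithm has not halted is a detail the paper relies on implicitly without stating, and is worth keeping.
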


\begin{proof}
	Consider the vertex $s \in N(u) \cap N(v)$. We know that $s$ is adjacent to $u$ and $v$, but also to at least $\tfrac{1}{2}n - 2$ other vertices. Because $|F| \leq r - 2$ (Claim \ref{clm:polyAlgSizesSets}), at least $\tfrac{1}{2}n - 2 - (r - 2) = \tfrac{1}{2}n - r$ vertices of $X$ or $Y$ are adjacent to $s$.
	
	Now consider the vertex $x \in X$ that is adjacent to $s$. By definition of $X$, we know that $x$ is also adjacent to $u$. This means that $x$ is adjacent to a vertex with label $\set{A}$ and a vertex with label $\set{S}$. By Lemma \ref{lem:annGraphRules}, $p(x) = \set{A}$. For a vertex $y \in Y$ adjacent to $s$, the same reasoning applies: $p(y) = \set{B}$. Therefore, at least $\tfrac{1}{2}n - r$ vertices are labelled with $A$ or $B$. 
	
	Combining this with the $r$ vertices labelled with $\set{S}$, and the fact that $u$ and $v$ are labelled with $\set{A}$ and $\set{B}$, respectively, we obtain a total of at least $r + 2 + \tfrac{1}{2}n - r = \tfrac{1}{2}n + 2$ vertices that are labelled with one label.
\end{proof}

\begin{claim}\label{clm:polyAlgAllVtxsTwoPossibleLabels}
	Each vertex is labelled with at most two possible labels.
\end{claim}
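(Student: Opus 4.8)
The plan is to notice that, after the rules of Lemma~\ref{lem:annGraphRules} have been exhaustively applied, almost every vertex already has at most two possible labels for structural reasons, so that the whole claim reduces to controlling $F$. Indeed, $u$ and $v$ carry the singletons $\set{A}$ and $\set{B}$; every vertex of $N(u)\cap N(v)$ carries $\set{S}$; every vertex of $X$ is adjacent to $u$, so by Lemma~\ref{lem:annGraphRules} its label set lies in $\set{A,S}$; symmetrically every vertex of $Y$ has label set in $\set{B,S}$. Hence it suffices to prove that $|p(f)|\le 2$ for every $f\in F$.

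First I would record a structural fact about the non‑stopping run: $N(u)\cap N(v)$ is a stable set, since two adjacent vertices there would both be forced to $\set{S}$, whereupon Lemma~\ref{lem:annGraphRules} would empty one of them and the algorithm would have stopped. Consequently, for any $s\in N(u)\cap N(v)$ — one exists because $r\ge 1$ by Lemma~\ref{lem:commonNeighMinDegHalfN} — all neighbours of $s$ other than $u$ and $v$ lie in $X\cup Y\cup F$, and since $|N(s)\cap F|\le |F|\le r-2$ by Claim~\ref{clm:polyAlgSizesSets}, the vertex $s$ has at least $\tfrac12 n-2-(r-2)=\tfrac12 n-r$ neighbours in $X\cup Y$. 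Exactly as in the proof of Claim~\ref{clm:polyAlgHowManyLabeled}, each such neighbour is adjacent to a vertex labelled $\set{A}$ (namely $u$) and one labelled $\set{S}$ (namely $s$), so Lemma~\ref{lem:annGraphRules} collapses it to $\set{A}$ if it lies in $X$ and to $\set{B}$ if it lies in $Y$. Writing $Z\subseteq X\cup Y$ for the set of vertices whose label set is a singleton, we obtain $|Z|\ge \tfrac12 n-r$.

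Now suppose toward a contradiction that some $f\in F$ has $p(f)=\set{A,B,S}$. Then $f$ is not adjacent to $u$ or $v$ (else $f$ would lie in $X$, $Y$, or $N(u)\cap N(v)$), nor to any vertex of $N(u)\cap N(v)$ or of $Z$: all of those carry a singleton label from $\set{A},\set{B},\set{S}$, and adjacency to such a vertex would, by Lemma~\ref{lem:annGraphRules}, have removed a label from $p(f)$. Hence $N(f)\subseteq (X\cup Y\cup F)\setminus(Z\cup\set{f})$, and since the sets $\set{u,v}$, $N(u)\cap N(v)$, $X$, $Y$, $F$ partition $V$ we have $|X|+|Y|+|F|=n-2-r$, so
\[
\deg(f)\;\le\;(n-2-r)-|Z|-1\;\le\;(n-2-r)-\left(\tfrac12 n-r\right)-1\;=\;\tfrac12 n-3,
\]
contradicting $\deg(f)\ge\delta\ge\tfrac12 n$. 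Therefore $|p(f)|\le 2$ for all $f\in F$, which together with the first paragraph establishes the claim.

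I expect the only delicate points to be (i) justifying that $N(u)\cap N(v)$ is stable in a run that has not stopped, which is what lets the neighbour count of $s$ be charged entirely to $X\cup Y\cup F$, and (ii) being careful that an $f\in F$ with three labels must avoid \emph{every} singleton‑labelled vertex of $X\cup Y$, not merely those adjacent to one fixed $s$; once these are in place the degree estimate is a short computation.
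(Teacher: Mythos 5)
Your proof is correct and follows essentially the same route as the paper: both reduce the claim to the vertices of $F$ and use the same degree-versus-singleton-count argument (you phrase it as a contradiction on $\deg(f)$, the paper phrases it directly as ``$f$ has at least $3$ singleton-labelled neighbours''). You make explicit two points the paper leaves implicit --- the stability of $N(u)\cap N(v)$ in a non-stopping run and the fact that a triple-labelled $f$ must avoid \emph{all} singleton-labelled vertices --- but the underlying counting is identical to that of Claim~\ref{clm:polyAlgHowManyLabeled} combined with the minimum-degree bound.
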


\begin{proof}
	We know from Claim \ref{clm:polyAlgHowManyLabeled} that at least $\tfrac{1}{2}n + 2$ vertices are labelled with one label. 
	Therefore, at most $\tfrac{1}{2}n - 2$ vertices can still potentially be annotated with two or three possible labels. 
	By the claims and description above, we know that all the vertices \emph{not} in $F$ have at most two possible labels. The lemma already holds for these vertices. Let us consider a vertex $f \in F$. This vertex is adjacent to at least $\tfrac{1}{2}n$ other vertices because of the minimum degree property of the graph. Besides $f$ there are at most $\tfrac{1}{2}n - 3$ other vertices that are not labelled with one label. It follows that $f$ is adjacent to at least $\tfrac{1}{2}n - (\tfrac{1}{2}n - 3) = 3$ vertices that are labelled with one label. 
	
	This reasoning can be applied for all the vertices in $F$. All the vertices in $X$ and $Y$ are adjacent to $u$ or $v$, making them all adjacent to a vertex labelled with one label. Therefore, by Lemma \ref{lem:annGraphRules}, all the vertices in $G$ have at most two possible labels.
\end{proof}

With these claims we can prove the following.

\begin{theorem}\label{thrm:halfDegreePolyAlgo}
	There exists a polynomial-time algorithm for \scs on graphs with minimum degree $\delta \geq \tfrac{1}{2}n$.
\end{theorem}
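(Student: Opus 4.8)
The plan is to turn the structural claims established above into a concrete polynomial-time decision procedure, and then argue correctness. The algorithm is the one already outlined in Section~\ref{sec:polyAlgForMinDegHalfN}: enumerate all $O(n^2)$ non-adjacent pairs $(u,v)$; for each such pair, set $p(u) = \set{A}$, $p(v) = \set{B}$, and exhaustively propagate the rules of Lemma~\ref{lem:annGraphRules} (which takes polynomial time, since each propagation step strictly shrinks some $p(w)$ and there are only $O(n)$ vertices with domains of size at most three). If propagation ever produces a vertex $w$ with $p(w) = \emptyset$, discard this pair and move on. Otherwise, by Claim~\ref{clm:polyAlgAllVtxsTwoPossibleLabels} every vertex now has at most two possible labels; feed the resulting annotated instance to the (3,2)-CSP solver. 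But here is the key point: by Lemma~\ref{lem:decreaseValuesBE}, \emph{every} variable can be eliminated, so the (3,2)-CSP instance is resolved in polynomial time (equivalently, one simply keeps propagating Lemma~\ref{lem:annGraphRules} and checks, in the surviving $2$-element domains, whether a consistent assignment exists — a $2$-CSP, i.e. essentially a 2-SAT instance, solvable in polynomial time). If some pair yields a consistent full labelling that also respects the non-emptiness constraints (here automatic: $u$ is labelled $A$, $v$ is labelled $B$, and by Claim~\ref{clm:polyAlgHowManyLabeled} at least one vertex of $N(u)\cap N(v)$ is labelled $S$), report "yes"; if all pairs fail, report "no".

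For correctness I would argue both directions. If the algorithm accepts, it has exhibited an explicit valid labelling of $G$ into non-empty classes $A,B,S$ with $S$ stable and no $A$--$B$ edge, which by the modelling in Section~\ref{sec:scsAsAnnot} is exactly a stable cutset. Conversely, suppose $G$ has a stable cutset $S^\star$ with sides $A^\star, B^\star$. Pick $u \in A^\star$ and $v \in B^\star$. Since no edge joins $A^\star$ to $B^\star$, the pair $(u,v)$ is non-adjacent, so it is among those enumerated. Starting from $p(u)=\set{A}$, $p(v)=\set{B}$, the propagation of Lemma~\ref{lem:annGraphRules} only ever removes a label from $p(w)$ when that label is inconsistent with the \emph{current partial} labelling; since the true labelling $(A^\star,B^\star,S^\star)$ extends $\{u\mapsto A, v\mapsto B\}$, an easy induction shows every removed label differs from $w$'s true label, so no domain is ever emptied and the true labelling always remains a valid completion of the surviving domains. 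Hence the final $2$-CSP is satisfiable and the algorithm accepts on this pair. Therefore the algorithm is correct.

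For the running time: the outer loop is $O(n^2)$ iterations; each iteration does polynomially-bounded propagation plus a polynomial-time solve of a $2$-element-domain CSP (via Lemma~\ref{lem:decreaseValuesBE}, or directly via 2-SAT). So the total time is polynomial in $n$, which is what the theorem claims.

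The main obstacle — and the place where the earlier claims do the real work — is establishing that after propagation \emph{every} vertex, including those in $F$, has at most two possible labels; this is precisely Claim~\ref{clm:polyAlgAllVtxsTwoPossibleLabels}, and it rests on the size bound $|F| \le r - 2$ from Claim~\ref{clm:polyAlgSizesSets} together with the counting in Claim~\ref{clm:polyAlgHowManyLabeled} showing at least $\tfrac12 n + 2$ vertices get pinned to a single label. Given those claims the remainder is routine; the only mild subtlety to be careful about is the parity of $n$ (Lemma~\ref{lem:commonNeighMinDegHalfN} only guarantees one common neighbour when $n$ is odd, which is exactly what is needed), and the observation that the non-emptiness constraint for class $S$ is automatically met because $N(u)\cap N(v)\neq\emptyset$ and those vertices are forced to $\set{S}$.
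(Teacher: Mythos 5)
Your proposal is correct and follows essentially the same route as the paper: enumerate non-adjacent pairs, fix $p(u)=\set{A}$, $p(v)=\set{B}$, propagate Lemma~\ref{lem:annGraphRules}, invoke Claim~\ref{clm:polyAlgAllVtxsTwoPossibleLabels} to get all domains down to size at most two, and finish via Lemma~\ref{lem:decreaseValuesBE} and the (3,2)-CSP solver. The only difference is that you spell out the completeness direction (a true stable cutset survives propagation for the pair $u\in A^\star$, $v\in B^\star$) and the non-emptiness of the $S$ class, which the paper leaves implicit; these additions are welcome but do not change the argument.
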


\begin{proof}
	Consider the graph $G$ with minimum degree $\delta \geq \tfrac{1}{2}n$. With the algorithm described above we know by Claim \ref{clm:polyAlgAllVtxsTwoPossibleLabels} that any vertex $v$ in $G$ has $|p(v)| \leq 2$. We now convert $G$ to a (3,2)-CSP instance as described in Section \ref{sec:annTo32CSP}. Lemma~\ref{lem:decreaseValuesBE} enables us to reduce the (3,2)-CSP instance, allowing it to be solved in polynomial time by the solver of Beigel and Eppstein \cite{3ColoringInTime1.3289^n}. 
\end{proof}

\subsection{Kernelisation for $\delta = \tfrac{1}{2}n - k$}
In the previous section, we presented a polynomial-time algorithm for graphs with minimum degree $\delta \geq \tfrac{1}{2}n$. Now, we investigate the graphs where the minimum degree is slightly lower, namely $\delta = \tfrac{1}{2}n - k$, where $k \in \N$ is a parameter. In this setting, we develop a kernelisation algorithm that reduces the instance size to a polynomial in $k$, while preserving the existence of a stable cutset.
The algorithm is very similar to the algorithm presented in Section \ref{sec:polyAlgForMinDegHalfN}. Therefore, we will skip certain duplicate explanations to enhance readability. 

Because Lemma \ref{lem:commonNeighMinDegHalfN} is not applicable in this context, we instead consider all pairs of non-adjacent vertices $(u,v) \in V \times V$ such that the distance between $u$ and $v$ is exactly two. This distance constraint forces $N(u) \cap N(v)$ not to be empty. Again, we label $p(u) = \set{A}$, $p(v) = \set{B}$, and all the vertices in $N(u) \cap N(v)$ with $\set{S}$. Also, we use the same definitions for the vertex sets.

\begin{claim}\label{clm:kernSetSizes}
	Given the above definitions: $|X| \geq \tfrac{1}{2}n - k - r$, $|Y| \geq \tfrac{1}{2}n - k - r$ and $|F| \leq 2k + r - 2$.
\end{claim}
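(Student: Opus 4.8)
The plan is to mirror the proof of Claim~\ref{clm:polyAlgSizesSets}, carefully re-inserting the factor of $k$ that enters through the weakened minimum-degree bound $\delta = \tfrac{1}{2}n - k$. First I would handle $|X|$ and $|Y|$. By the degree precondition, both $u$ and $v$ have at least $\tfrac{1}{2}n - k$ neighbours. Since $X = N(u) \setminus (N(u) \cap N(v))$ and $r = |N(u) \cap N(v)|$, we immediately get $|X| = |N(u)| - r \geq \tfrac{1}{2}n - k - r$, and the identical argument applied to $v$ gives $|Y| \geq \tfrac{1}{2}n - k - r$.

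Next I would bound $|F|$ by a counting argument identical in structure to the one in Claim~\ref{clm:polyAlgSizesSets}. The sets $\{u,v\}$, $N(u)\cap N(v)$, $X$, $Y$, and $F$ partition $V$, so $|F| = n - 2 - r - |X| - |Y|$. Substituting the lower bounds just established for $|X|$ and $|Y|$ yields
\[
	|F| \leq n - 2 - r - \left(\tfrac{1}{2}n - k - r\right) - \left(\tfrac{1}{2}n - k - r\right) = 2k + r - 2,
\]
which is exactly the claimed bound. I would note explicitly that $X$, $Y$, $N(u)\cap N(v)$, and $F$ are pairwise disjoint by construction (this is what licenses the equality $|F| = n - 2 - r - |X| - |Y|$), and that $u,v \notin N(u)\cap N(v)$ because $u$ and $v$ are non-adjacent, so that the ``$-2$'' for $\{u,v\}$ is correct.

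The only substantive point to check — and the place where a careless adaptation could go wrong — is that the set $N(u)\cap N(v)$ is nonempty so that $r \geq 1$ and the subsequent algorithm makes sense; but this is precisely guaranteed by the new requirement that $u$ and $v$ be at distance exactly two, as the text already observes just before the claim. There is no genuine obstacle here: the proof is a routine recomputation of the three cardinality estimates with $\tfrac{1}{2}n$ replaced by $\tfrac{1}{2}n - k$ wherever the minimum degree was invoked, and the bookkeeping of which vertices are being double-counted is the same as in the $\delta \geq \tfrac{1}{2}n$ case. I would therefore keep the write-up short, referring back to Claim~\ref{clm:polyAlgSizesSets} for the partition structure and only displaying the modified arithmetic.
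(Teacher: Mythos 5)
Your proposal is correct and follows exactly the same route as the paper, which simply states that the proof of Claim~\ref{clm:kernSetSizes} is that of Claim~\ref{clm:polyAlgSizesSets} with the minimum degree replaced by $\tfrac{1}{2}n - k$; your write-up just spells out the resulting arithmetic (and the disjointness bookkeeping) explicitly. No gaps.
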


\begin{proof}
	The proof is the same as that of Claim~\ref{clm:polyAlgSizesSets}, with minimum degree set to $\delta = \tfrac{1}{2}n - k$
\end{proof}

The algorithm now applies the rules explained in Section \ref{sec:scsAsAnnot}
until it is no longer possible. If at some point there exists a vertex $v$ with $p(v) = \emptyset$, then the algorithm stops and continues to the next pair of vertices $(u', v')$. 

Consider the vertex $s \in N(u) \cap N(v)$. We define the numerical variable $t$ which indicates how many vertices of $F$ are \emph{not} adjacent to $s$. 

\begin{claim}\label{clm:kernAnnotatedWithOne}
	At least $\tfrac{1}{2}n - 3k + t + 2$ vertices are annotated with one possible label.
\end{claim}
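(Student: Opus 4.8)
The plan is to mirror the argument of Claim~\ref{clm:polyAlgHowManyLabeled}, but track carefully the losses incurred by the weaker minimum-degree bound $\delta = \tfrac12 n - k$ and by the fact that not every vertex of $X \cup Y$ is forced to receive a single label. First I would fix a vertex $s \in N(u) \cap N(v)$ and count its neighbours outside $F$. Since $\deg(s) \geq \tfrac12 n - k$, and $s$ is adjacent to $u$ and $v$, at least $\tfrac12 n - k - 2$ of its remaining neighbours lie in $X \cup Y \cup F$. Of these, by definition exactly $t$ of the $F$-vertices adjacent to $s$ are excluded — wait, more precisely $t$ counts the $F$-vertices \emph{not} adjacent to $s$, so the number of $F$-vertices adjacent to $s$ is $|F| - t \le (2k + r - 2) - t$ by Claim~\ref{clm:kernSetSizes}. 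Hence the number of neighbours of $s$ that lie in $X \cup Y$ is at least
\[
(\tfrac12 n - k - 2) - \bigl((2k + r - 2) - t\bigr) = \tfrac12 n - 3k - r + t.
\]

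Next, exactly as in the proof of Claim~\ref{clm:polyAlgHowManyLabeled}, each such neighbour $x \in X$ of $s$ is adjacent both to $u$ (label $\set{A}$) and to $s$ (label $\set{S}$), so Lemma~\ref{lem:annGraphRules} forces $p(x) = \set{A}$; symmetrically each such neighbour in $Y$ is forced to $\set{B}$. Therefore at least $\tfrac12 n - 3k - r + t$ vertices of $X \cup Y$ carry a single label. Adding the $r$ vertices of $N(u)\cap N(v)$ (all labelled $\set{S}$) and the two vertices $u, v$, we obtain a total of at least
\[
\bigl(\tfrac12 n - 3k - r + t\bigr) + r + 2 = \tfrac12 n - 3k + t + 2
\]
vertices annotated with exactly one possible label, which is the claimed bound. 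I would also note that if the application of the rules of Lemma~\ref{lem:annGraphRules} did not halt with some empty $p(\cdot)$, then these single-label assignments are mutually consistent, so the count is legitimate.

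The only real subtlety — and the step I expect to require the most care — is the bookkeeping around $F$ and $t$: one must be sure that the vertices of $X \cup Y$ adjacent to $s$ that get counted are genuinely distinct from the $r$ vertices of $N(u)\cap N(v)$ and from $u,v$ (they are, since $X$ and $Y$ are defined to exclude the common neighbourhood, and $s \notin \{u,v\}$), and that the estimate $|F| - t$ for the number of $F$-neighbours of $s$ is used with the correct sign. Everything else is a direct transcription of the $\delta \geq \tfrac12 n$ argument with $\tfrac12 n$ replaced by $\tfrac12 n - k$ and the extra slack in $|F|$ from Claim~\ref{clm:kernSetSizes} absorbed into the $-3k + t$ term.
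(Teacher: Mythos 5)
Your proposal is correct and follows essentially the same argument as the paper: lower-bound the number of neighbours of $s$ in $X \cup Y$ by $(\tfrac12 n - k - 2) - (|F| - t)$ using Claim~\ref{clm:kernSetSizes}, force those vertices to a single label via Lemma~\ref{lem:annGraphRules}, and add the $r$ common neighbours plus $u$ and $v$. The bookkeeping around $|F| - t$ matches the paper's computation exactly.
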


\begin{proof}
	The vertex $s$ defined above is besides $u$ and $v$ adjacent to at least $\tfrac{1}{2}n - k - 2$ other vertices. Therefore, the minimal number of neighbours of $s$ that belong to $X \cup Y$ is given by $|N(s)| - |N(s) \cap F|$. By Claim \ref{clm:kernSetSizes} and the definition of $t$ above, this translates to $(\tfrac{1}{2}n - k - 2) - (2k + r - 2 - t)$. Thus, at least $\tfrac{1}{2}n - 3k - r + t$ vertices of $X$ or $Y$ are adjacent to $s$. 
	
	By Lemma \ref{lem:annGraphRules}, we can now annotate all these vertices with $\set{A}$ or $\set{B}$, if the corresponding vertex is in $X$ or $Y$ respectively. 
	Thus, in total at least $r + 2 + \tfrac{1}{2}n - 3k - r + t = \tfrac{1}{2}n - 3k + t + 2$ vertices are labelled with one possible label.
\end{proof}

With these claims we prove, 

\begin{theorem}
	Let $G$ be a graph with minimum degree $\delta = \tfrac{1}{2}n - k$. There exists a polynomial-time algorithm that reduces the \scs instance $(G, k)$ to a (3,2)-CSP instance with a size polynomial in $k$, where the result is preserved.  
\end{theorem}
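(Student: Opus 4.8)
The plan is to mimic the structure of the proof of Theorem~\ref{thrm:halfDegreePolyAlgo} from the previous section, but carefully track the dependence on the parameter $k$. The algorithm is: enumerate all $O(n^2)$ non-adjacent pairs $(u,v)$ at distance exactly two, and for each, set $p(u)=\set{A}$, $p(v)=\set{B}$, label the non-empty set $N(u)\cap N(v)$ with $\set{S}$, and propagate the rules of Lemma~\ref{lem:annGraphRules} to fixpoint. If some vertex gets $p(v)=\emptyset$, reject this pair and move on. The key observation is that by Claim~\ref{clm:kernAnnotatedWithOne}, after propagation at least $\tfrac12 n - 3k + t + 2$ vertices are fixed to a single label; hence at most $3k - t$ vertices remain with two or three possible labels. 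I would first argue that in fact every remaining unfixed vertex already has at most two possible labels: a vertex $f\in F$ has degree $\geq \tfrac12 n - k$, and since the number of \emph{other} unfixed vertices is at most $3k - t - 1$, the vertex $f$ is adjacent to at least $\tfrac12 n - k - (3k - t - 1) = \tfrac12 n - 4k + t + 1$ vertices that are fixed; as long as this is positive $f$ is adjacent to a singleton-labelled vertex and so (by Lemma~\ref{lem:annGraphRules}) $|p(f)|\le 2$. Vertices in $X\cup Y$ are adjacent to $u$ or $v$, which are singletons, so they too satisfy $|p(\cdot)|\le 2$. Thus \emph{every} vertex has $|p(\cdot)|\le 2$.

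Next I would invoke the reduction to \problemfont{(3,2)-CSP} from Section~\ref{sec:annTo32CSP} together with Lemma~\ref{lem:decreaseValuesBE}: every vertex with $|p(v)|\le 2$ becomes a variable with at most two allowed values, which can be eliminated, leaving a \problemfont{(3,2)-CSP} instance on at most the number of vertices with three possible labels — but we have just shown there are none, so actually after elimination no variables remain and the pair $(u,v)$ can be decided in polynomial time. Repeating over all $O(n^2)$ pairs, plus the initial check (as in Section~\ref{sec:polyAlgForMinDegHalfN}) of whether any vertex has a stable neighbourhood, decides \scs. However, as stated the theorem only asks for a \emph{kernelisation} producing an instance polynomial in $k$, so I would instead phrase the output accordingly: if the stable-neighbourhood check or any of the enumerated pairs succeeds, output a trivial yes-instance; otherwise, the reasoning above shows that for every relevant pair the propagated instance has all vertices with $\le 2$ labels, so the whole original instance — after contracting or otherwise bounding — is equivalent to one of size $O(k)$ (the number of vertices that could conceivably survive is controlled by the $3k$ bound in the unlucky regime where $t$ is small and the inequality above is not clearly positive).

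The main obstacle, and the part I would spend the most care on, is the regime where $\tfrac12 n - 4k + t + 1$ is not guaranteed positive — i.e.\ when $n$ is small relative to $k$, roughly $n = O(k)$. In that case the argument that all $F$-vertices reduce to two labels breaks down. The clean fix is a dichotomy: if $n \le C k$ for a suitable constant $C$ (say $C = 8$), then the instance is \emph{already} of size polynomial in $k$ and we simply output it unchanged as the kernel; if $n > Ck$, then $\tfrac12 n - 4k + t + 1 > 0$ holds for every pair regardless of $t$, so the argument of the first paragraph goes through and every enumerated pair is polynomial-time decidable, meaning the original instance is equivalent to a constant-size (hence trivially $k$-bounded) yes/no instance. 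Putting these two branches together gives a polynomial-time procedure that always outputs an equivalent instance of size polynomial in $k$, which is exactly the claimed kernelisation; the propagation, the counting from Claim~\ref{clm:kernAnnotatedWithOne}, and Lemma~\ref{lem:decreaseValuesBE} supply all the ingredients, and the only real subtlety is handling the small-$n$ corner case correctly.
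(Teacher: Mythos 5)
Your count of the unfixed vertices is wrong, and the error propagates through the whole argument. Claim~\ref{clm:kernAnnotatedWithOne} fixes at least $\tfrac12 n - 3k + t + 2$ vertices, so the number of vertices that may still carry two or three labels is at most $n - (\tfrac12 n - 3k + t + 2) = \tfrac12 n + 3k - t - 2$, not $3k - t$. Redoing your degree computation with the correct count, a vertex $f \in F$ is adjacent to at least $(\tfrac12 n - k) - (\tfrac12 n + 3k - t - 3) = t - 4k + 3$ fixed vertices; the $\tfrac12 n$ terms cancel, so the relevant condition is $t > 4k - 3$ and has nothing to do with $n$. Consequently your dichotomy on $n$ versus $Ck$ does not address the actual failure mode: for arbitrarily large $n$ one can have $t \leq 4k - 3$, and in that regime your claim that every vertex ends up with at most two labels is unjustified, as is the conclusion that each enumerated pair is decidable in polynomial time (which would make the whole problem polynomial-time solvable, a much stronger statement than the kernelisation being claimed).

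What resolves the bad case $t \leq 4k - 3$ is the step you are missing: every vertex of $F$ that \emph{is} adjacent to $s$ loses the label $S$, since $p(s) = \set{S}$ and Lemma~\ref{lem:annGraphRules} applies, so it retains at most two labels and its variable is eliminated by Lemma~\ref{lem:decreaseValuesBE}. The only vertices that can still carry three labels are therefore the $t$ vertices of $F$ not adjacent to $s$, and hence the reduced \problemfont{(3,2)-CSP} instance has at most $t \leq 4k - 3 = \BigO{k}$ variables, which is exactly the claimed kernel. Your first paragraph does handle the case $t > 4k - 3$ correctly once the arithmetic is repaired, but without the observation about adjacency to $s$ the small-$t$ case is not covered, and that case is where the kernel of size $\BigO{k}$ (rather than a polynomial-time decision) actually arises.
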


\begin{proof}
	By Claim~\ref{clm:kernAnnotatedWithOne}, we know that at least $\tfrac{1}{2}n - 3k + t + 2$ vertices are annotated with one possible label. Therefore, at most $\tfrac{1}{2}n + 3k - t - 2$ vertices have two or three possible labels annotated. Again, we know that all the vertices that are not in $F$ have at most two possible labels. So, we consider a vertex $f \in F$. This vertex is adjacent to at least $\tfrac{1}{2}n - k$ other vertices. Besides $f$ there are $\tfrac{1}{2}n + 3k - t - 3$ other vertices with two or three possible labels annotated.
	
	If $\tfrac{1}{2}n - k - (\tfrac{1}{2}n + 3k - t - 3) > 0$, which rewrites to $t > 4k - 3$, then we know that $f$ is adjacent to at least one vertex that is annotated with one possible label. This reasoning can be applied for all the vertices in $F$. By Lemma \ref{lem:annGraphRules}, we know that all the vertices in this case have at most two possible labels annotated.
	
	But, if $t \leq 4k - 3$, then it is not certain that $f$ is adjacent to a vertex with only one possible label. By definition of $t$ given above, we know that the $|F| - t$ other vertices in $F$ are adjacent to $s$. Therefore, at most $t$ variables have still three possible labels annotated. If we convert our instance to a (3,2)-CSP instance as described in Section \ref{sec:annTo32CSP}, and apply Lemma \ref{lem:decreaseValuesBE}, then we get a (3,2)-CSP instance with $t$ variables, which is $\mathcal{O}(k)$. 
\end{proof}
	\section{Minimum degree $c$}\label{sec:minDegC}
Chen et al.~\cite{MatchingCutKernelizationSingleExponentialTimeFPTAndExactExponentialAlgorithm} showed, for the closely related \problemfont{Matching Cutset} problem, a fast exact algorithm on graphs with minimum degree $c > 3$. In this section, we present a fast exact algorithm for \scs on graphs with minimum degree $\delta \geq 3$, which becomes fast when $\delta > 10$. First, we will prove that \scs remains \NPC when restricted to graphs with minimum degree $c$, where $c > 1$. Then we describe an exact algorithm that runs in $\OStar{\lambda^n}$ time, where $\lambda$ is the positive root of the polynomial $x^{\delta + 2} - x^{\delta + 1} - 6$.

\subsection{\NPC}

\begin{theorem}
	\scs remains \NPC when restricted to graphs with minimum degree $c$, where $c$ is a constant greater than 1.
\end{theorem}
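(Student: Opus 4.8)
The plan is to reduce from the ordinary \scs problem, which is already known to be \NPC even on $K_4$-free graphs (Brandst\"{a}dt et al.). Given an arbitrary connected graph $G = (V, E)$, I would like to attach, to every vertex whose degree is too small, a gadget that raises its degree to at least $c$ without creating any new stable cutsets and without destroying or introducing one relative to $G$. The natural candidate gadget is a clique of fixed size glued onto a vertex: attaching a large clique contributes no stable cutset of its own (a clique has no stable cutset, and a stable set can contain at most one clique vertex, so it cannot separate the clique internally), and it does not help separate $G$ either. So the construction is: for each $v \in V$ with $\deg_G(v) < c$, add a clique $K$ on $c$ fresh vertices and make $v$ adjacent to enough of them (and, if needed, add a few extra edges among the new vertices or a few extra pendant-into-clique vertices) so that every vertex — both $v$ and each new vertex — ends up with degree at least $c$ in the final graph $G'$. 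Since $c$ is a fixed constant, this adds only $O(c \cdot n)$ vertices and edges, so the reduction is polynomial.

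The key steps, in order, are: (1) describe the gadget precisely and verify that in $G'$ every vertex has degree $\geq c$; (2) show that if $S$ is a stable cutset of $G$, then $S$ is still a stable cutset of $G'$ — this is immediate because the gadget cliques are attached at single cut-off points and removing $S$ still disconnects the two sides, with the clique hanging off whichever side its attachment vertex is on (one has to check the attachment vertex is not in $S$, or handle that case by noting $S$ restricted to $V$ still works); (3) conversely, show that if $S'$ is a stable cutset of $G'$, then $S' \cap V$ is a stable cutset of $G$. For step (3), the crucial observation is that a clique gadget is $2$-connected internally and is joined to the rest of the graph only through its attachment vertex $v$, so no vertex of the gadget other than possibly $v$ can be needed for separation; formally, if we delete $S'$ and $G' - S'$ is disconnected, then because each gadget clique minus a stable set is still connected and still hangs off the component of $v$, the disconnection must already be witnessed inside $V$, giving that $G - (S' \cap V)$ is disconnected, and $S' \cap V \neq \emptyset$ because otherwise the only way to disconnect $G'$ would be to cut inside a clique, which is impossible.

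The main obstacle I anticipate is the bookkeeping to make the gadget \emph{simultaneously} raise the degree of the original low-degree vertex \emph{and} keep all the freshly added clique vertices at degree $\geq c$, while keeping the gadget's internal structure such that it provably contributes no stable cutset and does not create one through interaction with $G$ — in particular one must be careful that a stable set is not allowed to pick up one vertex from each of several different gadgets in a way that, together with a small cut in $G$, produces a spurious separator (this does not actually happen, since gadget vertices are only adjacent within their own gadget and to one attachment vertex, but it needs a clean argument). A secondary subtlety is the boundary case $c = 2$, where $K_4$-freeness-style constructions matter less, but a triangle-based or $C_4$-based gadget still works and the same argument goes through; I would state the gadget uniformly for all constant $c > 1$ and remark that for $c \le 2$ it may be taken even simpler. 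Once these lemmas are in place, \NPC of the restricted problem follows immediately from \NPC of general \scs together with the polynomial-time computability of $G'$.
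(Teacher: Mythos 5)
Your construction has a genuine gap in the backward direction: a clique glued onto a \emph{single} vertex $v$ turns $v$ into a cut vertex of $G'$, because the gadget is joined to the rest of the graph only through $v$. Hence $\set{v}$ (a stable set of size one) is a stable cutset of $G'$ whenever $v$ carries a gadget and has at least one neighbour in $G$, regardless of whether $G$ has a stable cutset. Concretely, take $G = K_4$ and $c = 5$: $G$ has no stable cutset (every stable set has size one and $K_4$ minus a vertex is $K_3$, which is connected), but your $G'$ attaches a pendant $K_5$ to each vertex, so deleting any single original vertex isolates its gadget and disconnects $G'$. Your step (3) asserts that each gadget clique ``still hangs off the component of $v$'' after deleting $S'$, but this is exactly false when $v \in S'$, which is the case your argument silently skips. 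The obstacle you anticipated (several gadgets contributing one vertex each to a spurious separator) is not the real problem; the real problem is that the attachment point itself becomes a separator.

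The paper's proof repairs precisely this by attaching the clique to an \emph{edge} rather than to a vertex: for every edge $(u,v)$ of $G$ it adds a clique $Q_{(u,v)}$ of size $c-1$ so that $Q_{(u,v)} \cup \set{u,v}$ is an induced clique of size $c+1$. Now severing a gadget from the rest of $G'$ would require deleting both $u$ and $v$, which no stable set can do since $u$ and $v$ are adjacent; so every gadget always remains attached to at least one surviving endpoint, and any stable cutset of $G'$ can be pruned of gadget vertices to yield a stable cutset of $G$. This also handles the degree bookkeeping uniformly (every new vertex gets degree exactly $c$, every original vertex of degree $d\geq 1$ gets degree $dc \geq c$) without the ad hoc extra edges you hedge about. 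If you want to salvage your vertex-based gadget, you would have to attach each gadget to two adjacent vertices of $G$ (or otherwise $2$-connect it to $G$ through an edge), at which point you have essentially rediscovered the paper's construction.
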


\begin{proof}
	Given a constant $c > 1$, we reduce \scs, to \scs restricted to graphs with minimum degree $c$, in polynomial time. Consider a graph $G = (V, E)$. We create the graph $G'$ by attaching the vertices of every edge $e$ in $G$ with a $c-1$ sized clique, $Q_e$. So, given an edge $(u, v) \in E$, $Q_{(u,v)} \cup \set{u, v}$ is an induced clique of size $c+1$ in $G'$. The graph $G'$ has $m(c - 1)+ n$ vertices, $m \cdot \frac{c(c+1)}{2}$ edges, and a minimum degree of $c$. An example of this construction is visible in Figure \ref{fig:npcompletenessconstruction}.
	
	If you have some minimum stable cutset $S$ in $G$ and some edge $(u, v)$ in G, then at most one of the two vertices can be in $S$. We assume that $u \in S$. The clique $Q_{(u, v)}$ adjacent to $(u, v)$ in $G'$ is not adjacent to any other vertices in $G$. Therefore, the clique can be added to the connected component where $v$ is part of, without making any vertices of $A$ adjacent to vertices of $B$. So, if $G$ has a stable cutset, then $G'$ has a stable cutset. 
	
	Now, we consider having a minimum stable cutset $S$ separating $G'$ in $A$ and $B$. Also, we have some edge $(u, v)$ in $G'$, which is an edge in $G$. By construction of $G'$ we know that a clique $Q_{(u,v)}$ of size $c-1$ is adjacent to both $u$ and $v$. Let $x$ be a vertex part of $Q_{(u,v)}$. If $x \in S$, then $S \setminus \set{x}$ is a stable cutset in $G' - x$, still separating $A$ and $B$. This is because $S$ separates $G'$ into the connected components $A$ and $B$. So, deleting $x$ from $G'$ and $S$ will not make vertices of $A$ adjacent with vertices of $B$. 
	
	Consider that $x \in A$. In a clique only one vertex can be part of $S$. So, we know that $u$ or $v$ is in $A$, because $Q_{(u,v)} \cup \set{u, v}$ is an induced clique in $G'$. Therefore, deleting $x$ will not make the connected component $A$ empty, and thus we can say that $G' - x - S$ will still separate the graph into two connected components $A - x$ and $B$. The same reasoning can be applied for the case that $x \in B$. From this, we can conclude that if $G'$ has a stable set, then $G'$ without all $Q_e$ has a stable set. Since this reduced graph is exactly $G$, it follows that if $G'$ has a stable set, then so does $G$. 
	
	By proving both directions, we have showed $\mathcal{NP}$-hardness. The verification of a solution remains the same as for the normal \scs. Therefore, \scs when restricted to graphs with minimum degree $c$ is \NPC.
\end{proof}

\begin{figure}
	\centering
	\includegraphics[width=0.5\linewidth]{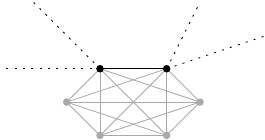}
	\caption{Visualization of how the instance for \scs with minimum degree $c > 1$ is constructed. In this example $c = 5$. The vertices and edges in black are of the \scs instance $G$, and the vertices in gray are added to create the instance $G'$.}
	\label{fig:npcompletenessconstruction}
\end{figure}

\subsection{Exact algorithm with Measure and Conquer}\label{sec:exactAlgMeasureConquer}
Now, we give a simple algorithm that solves the \scs problem for graphs with minimum degree $\delta \geq 3$. This is because a graph with minimum degree one is trivial, and graphs with minimum degree two can be reduced to graphs with $\delta \geq 3$. After that, we analyse the algorithm using \emph{Measure and Conquer} \cite{MeasureAndConquer} to get to a time complexity of $\OStar{\lambda^n}$, where $\lambda$ is the positive root of $x^{\delta + 2} - x^{\delta + 1} - 6$. This algorithm becomes relevant when $\delta \geq 11$ which gives a time complexity of $\OStar{1.2880^n}$ and therefore outperforms our own exact algorithm of Section \ref{sec:generalExactAlg}.

The branching algorithm operates on an annotated graph $G = (V, E, p)$, as defined in Section~\ref{sec:scsAsAnnot}. First, it applies the rules from Lemma~\ref{lem:annGraphRules}. This will not do anything in the initial call, but in the subsequent recursive calls this will reduce the amount of possible labels per vertex significantly. Next, the algorithm checks for a vertex $v$ with $|p(v)| = 3$. If such a vertex exists then the algorithm branches into three subcases, assigning each possible label to $v$ in a separate branch. Each branch proceeds with a recursive call. If a vertex $v$ with $|p(v)| = 0$ is encountered, the algorithm terminates along that branch, as $v$ can no longer be labelled. Once no vertex with $|p(v)| = 3$ remains, the algorithm converts the annotated graph into a (3,2)-CSP instance, as described in Section~\ref{sec:annTo32CSP}. The result of the (3,2)-CSP solver will then determine whether $G$ contains a stable cutset. The pseudocode for this procedure can be seen in Algorithm \ref{alg:pseudoCodeBranchAlg}.

\begin{algorithm}
	\begin{algorithmic}
		\Function{BranchForSCS}{$G = (V, E, p)$}
			\State $G \gets \Call{applyRules}{G}$		
			\If{$\exists v \in V$ with $|p(v)| = 3$}
				\State $G_1 \gets \Call{BranchForSCS}{G : p(v) = \set{A}}$
				\State $G_2 \gets \Call{BranchForSCS}{G : p(v) = \set{B}}$
				\State $G_3 \gets \Call{BranchForSCS}{G : p(v) = \set{S}}$
				\If{any $G_i$ is flagged that it contains a \scs}
					\State \Return $G_i$
				\EndIf
			\Else
				\State $I \gets$ transform $G$ to (3,2)-CSP instance.
				\State \Return $\Call{32CSPSolver}{I}$
			\EndIf
		\EndFunction
	\end{algorithmic}
	\caption{Pseudocode of the branching algorithm that solves \scs.}
	\label{alg:pseudoCodeBranchAlg}
\end{algorithm}

To analyse this simple branching algorithm, we are going to use \emph{Measure and Conquer} \cite{MeasureAndConquer}. Unlike \emph{Branch and Reduce} analysis, which typically uses a standard measure such as the number of vertices, this technique focuses on choosing a smart measure that captures the progress of the algorithm better. This often leads to a tighter and more precise analysis of the running time.

In a standard branch and reduce analysis of our algorithm, the measure would likely be the number of vertices that do not have one possible label. However, reducing the number of possible labels for a vertex from three to two also represents meaningful progress in the algorithm because the corresponding variable for such a vertex can be deleted when the annotated graph is converted to a (3,2)-CSP instance. Therefore, we are going to use the following measure: $\kappa = w_2 n_2 + w_3 n_3$, where $n_i$ is the amount of vertices with $i$ possible labels, and $w_i$ is the corresponding weight. 

Let us analyse the branching rule of our algorithm using the new defined measure. Consider the vertex $v$ that is found in the graph with $|p(v)| = 3$. By Lemma~\ref{lem:annGraphRules}, we know that all the neighbours $u \in N(v)$ must satisfy $|p(u)| \geq 2$, because otherwise it would contradict with the assumption that $|p(v)| = 3$.

Assume that we have the vertices $x, y \in N(v)$, with $|p(x)| = 3$ and $|p(y)| = 2$. After branching and applying the rules of Lemma~\ref{lem:annGraphRules}, the weight of $v$ is reduced from $w_3$ to 0. The weight of $x$ is reduced from $w_3$ to $w_2$, and the weight of $y$ is reduced from $w_2$ to $0$ in two of the three branches.

Now, suppose that $v$ is adjacent to $t$ vertices with three possible labels, and $deg(v) - t$ vertices with two possible labels. Because the algorithm does not delete any vertices, we know that $deg(v) \geq \delta$. This gives the following branching factor: 
\[\tau_{w_3, t}\big(w_3 + t(w_3 - w_2), \quad w_3 + t(w_3-w_2)+ (\delta - t)w_2, \quad w_3 + t(w_3-w_2) + (\delta - t)w_2 \big)\]

Note that in the above expression for the branching factor, the weight $w_3$ appears only with a positive coefficient. Therefore, increasing $w_3$ will maximize its contribution to the decrease in measure. So, we set $w_3 = 1$ and update the branching factor accordingly:

\[\tau_{t}\big(1 + t(1 - w_2), \quad 1 + t(1-w_2)+ (\delta - t)w_2, \quad 1 + t(1-w_2) + (\delta - t)w_2 \big)\]

\begin{lemma}\label{lem:worstCaseBranchingLemma}
	Depending on the choice of $w_2$, the worst-case branching factor is either with $t = 0$ or $t = \delta$. 
\end{lemma}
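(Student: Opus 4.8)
The plan is to treat $t$ as a real parameter ranging over $[0,\delta]$ and prove that the branching number $\tau_t$ attains its maximum on this interval at one of the endpoints; since the relevant values of $t$ are integers in $\{0,1,\dots,\delta\}$, the lemma follows at once. We may assume $0 \le w_2 \le w_3 = 1$ (the weight of a variable with two remaining values should not exceed that of a variable with three); then the exponents $a(t) := 1 + t(1-w_2)$ and $b(t) := a(t) + (\delta - t)w_2$ appearing in the branching vector $(a(t),b(t),b(t))$ are both strictly positive on $[0,\delta]$, so $\tau_t$, the unique $x > 1$ with $x^{-a(t)} + 2x^{-b(t)} = 1$, is well defined.

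The crux is the claim that $L(a,b) := \ln\tau$, where $\tau$ is the branching number of the vector $(a,b,b)$, is \emph{quasi-convex} on $\{(a,b): a,b>0\}$. To establish this I would fix $c>0$ and observe that, since $\ell \mapsto e^{-a\ell} + 2e^{-b\ell}$ is strictly decreasing, the inequality $L(a,b)\le c$ is equivalent to $e^{-ac} + 2e^{-bc}\le 1$. Hence the sublevel set $\{L\le c\}$ equals $\{(a,b): e^{-ac}+2e^{-bc}\le 1\}$, which is convex because $(a,b)\mapsto e^{-ac}+2e^{-bc}$ is a sum of convex functions. As all sublevel sets of $L$ are convex, $L$ is quasi-convex.

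Finally, $t\mapsto (a(t),b(t))$ is an affine map of $[0,\delta]$ into the domain of $L$, so $\psi(t):=L(a(t),b(t))$ is a one-dimensional quasi-convex function: each of its sublevel sets is an interval, being the intersection of $[0,\delta]$ with the preimage of a convex set under an affine map. A quasi-convex function on a closed interval has no strict interior maximum — otherwise, for a level $c$ strictly between $\psi(t^*)$ at an interior maximiser $t^*$ and $\max\{\psi(0),\psi(\delta)\}$, the set $\{t:\psi(t)\le c\}$ would contain $0$ and $\delta$ but not $t^*$, contradicting that it is an interval. Since $\tau_t = e^{\psi(t)}$ and $\exp$ is increasing, $\tau_t$ also attains its maximum over $[0,\delta]$, hence over $\{0,\dots,\delta\}$, at $t=0$ or $t=\delta$.

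The only genuinely delicate point is the quasi-convexity step, and within it the orientation of the monotonicity of $\ell\mapsto e^{-a\ell}+2e^{-b\ell}$, which is what makes the sublevel sets (and not the superlevel sets) of $L$ convex; the rest is bookkeeping. As a sanity check one can also differentiate the defining equation implicitly: the sign of $\mathrm{d}\tau_t/\mathrm{d}t$ turns out to be controlled by the sign of $(1-w_2) - 2w_2\,\tau_t^{-b(t)}$, which is negative throughout when $w_2$ is small and positive throughout when $w_2$ is close to $1$ — so which of the two endpoints is worst genuinely depends on $w_2$, matching the phrasing of the lemma.
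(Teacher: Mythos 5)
Your proof is correct, and it takes a genuinely different route from the paper's. The paper argues informally, neighbour by neighbour: a neighbour with three remaining labels contributes $1-w_2$ to the measure drop in all three branches, one with two remaining labels contributes $w_2$ in only two of the three, and for any fixed $w_2$ one of these two contribution patterns is uniformly ``worse'', so the worst case is homogeneous, i.e.\ $t=0$ or $t=\delta$. As written that argument is somewhat loose --- it compares $1-w_2$ against $w_2$ directly (suggesting a threshold at $w_2=0.5$) even though the two contributions enter the branching vector in different patterns, so the comparison of branching factors is not literally a comparison of those two numbers. Your argument closes exactly that gap: the observation that $L(a,b)\le c$ is equivalent to $e^{-ac}+2e^{-bc}\le 1$, hence that $\ln\tau$ of the vector $(a,b,b)$ has convex sublevel sets and is quasi-convex, composed with the affine map $t\mapsto(a(t),b(t))$, yields endpoint maximality on $[0,\delta]$ with no case analysis at all, and it generalises immediately to other branching shapes. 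The price is the standing assumption $0\le w_2\le 1$, which is harmless here (it is needed anyway for $a(t),b(t)>0$ and is satisfied by the weights the paper ultimately chooses). One small caution in your closing sanity check: implicit differentiation gives that the sign of $\mathrm{d}\tau_t/\mathrm{d}t$ is the \emph{negative} of the sign of $(1-w_2)-2w_2\tau_t^{-b(t)}$ (after substituting the defining equation $\tau^{-a}+2\tau^{-b}=1$), so make sure the pronoun ``which'' in that sentence refers to the derivative and not to the displayed expression; this does not affect the main argument.
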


\begin{proof}
	Consider the vertex $v$ with $|p(v)| = 3$, that is selected by the algorithm for branching. Each vertex in $N(v)$ is annotated with either two or three possible labels. If a vertex $x \in N(v)$ has $|p(x)| = 3$ then branching on $v$ causes a decrease in the measure of $1-w_2$. The smaller $w_2$ is, the more these type of neighbours contribute to the decrease of the measure.
	
	A neighbouring vertex $y \in N(v)$ with $|p(y) = 2|$ contributes a decrease of $w_2$ to the measure. A larger $w_2$ value means a greater contribution to the measure's decrease from such neighbours. Vertices annotated with two possible labels, will only reduce to one possible label in two out of the three branches formed by the branching rule. This implies that $w_2$ must be greater than $0.5$ to make a vertex annotated with two possible labels more interesting for the decrease in measure than a vertex annotated with three possible labels. 
	
	Thus, given any vertex $u \in N(v)$, depending on the choice of $w_2$, it is worse for the decrease of the measure if $u$ is labelled with either three or two possible labels. Therefore, the worst-case scenario for the decrease of the measure is that all the vertices in $N(v)$ are labelled with either two or three labels, depending on $w_2$.  
\end{proof}

By Lemma~\ref{lem:worstCaseBranchingLemma}, we know that the time complexity of our algorithm is determined by the maximum of the following two branching factors:
\begin{align*} 
	\tau_1 & \big(1 + \delta(1 - w_2), 1 + \delta(1-w_2), 1 + \delta(1-w_2) \big) \quad & (\forall u \in N(v): |p(u)| = 3) \\ 
	\tau_2 & \big(1, 1 + \delta w_2,    1 + \delta w_2 \big) \quad & (\forall u \in N(v): |p(u)| = 2)
\end{align*}

The task now is to find the value of $w_2$ that minimizes $\max{(\tau_1, \tau_2)}$. Observe that $w_2$ appears with a negative coefficient in $\tau_1$ and with a positive coefficient in $\tau_2$. Therefore, increasing $w_2$ causes $\tau_1$ to increase and $\tau_2$ to decrease, and vice versa. As a result, the minimum of $\max{(\tau_1, \tau_2)}$ is achieved when a value for $w_2$ is used such that the two expressions are equal. 

\begin{theorem}
	\scs can be solved in time $\OStar{\lambda^n}$ for a graph with minimum degree $\delta \geq 3$, where $\lambda$ is the positive root of $x^{\delta + 2} - x^{\delta + 1} - 6$.
\end{theorem}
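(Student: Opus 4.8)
The plan is a standard \emph{Measure and Conquer} bound on the recursion tree of Algorithm~\ref{alg:pseudoCodeBranchAlg}, followed by solving for the weight $w_2$ that balances the two extremal branching vectors. First I would record the bookkeeping facts: with $w_3 = 1$ the initial measure is $\kappa_0 = w_3 n_3 = n$, since every vertex starts with $p(v) = L$; the rule-propagation step and every branching step only shrink label sets, so $\kappa$ is non-increasing along every root-to-leaf path and stays $\ge 0$; and at any leaf no vertex has $|p(v)| = 3$, so after conversion to \problemfont{(3,2)-CSP} every variable has domain of size at most two and Lemma~\ref{lem:decreaseValuesBE} lets the solver eliminate all variables, i.e.\ each leaf is handled in polynomial time. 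Consequently the running time is $\OStar{\lambda^n}$, where $\lambda$ is the largest branching number occurring in the recursion. By Lemma~\ref{lem:worstCaseBranchingLemma} the only candidate worst cases are the vectors $\tau_1$ (all neighbours of the branched vertex have three possible labels) and $\tau_2$ (all have two), and by the monotonicity noted before the theorem ($w_2$ enters $\tau_1$ with a negative coefficient and $\tau_2$ with a positive one) the choice of $w_2$ minimising $\max(\tau_1,\tau_2)$ is the one at which the two branching numbers coincide; call this common value $\lambda$.

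The core is then an algebraic elimination. Write $a = 1 + \delta(1-w_2)$ and $b = 1 + \delta w_2$, and note the identity $a + b = \delta + 2$, which is independent of $w_2$. The recurrence for $\tau_1$, namely $3\lambda^{-a} = 1$, gives $\lambda^{a} = 3$; the recurrence for $\tau_2$, namely $\lambda^{-1} + 2\lambda^{-b} = 1$, rearranges (using $\lambda > 1$) to $\lambda^{b} = \dfrac{2\lambda}{\lambda - 1}$. Multiplying these two identities,
\[
\lambda^{\delta + 2} \;=\; \lambda^{a+b} \;=\; \lambda^{a}\,\lambda^{b} \;=\; 3 \cdot \frac{2\lambda}{\lambda - 1} \;=\; \frac{6\lambda}{\lambda - 1},
\]
so $\lambda^{\delta+2}(\lambda - 1) = 6\lambda$, and dividing by $\lambda$ gives $\lambda^{\delta+2} - \lambda^{\delta+1} - 6 = 0$. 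To see that this identifies $\lambda$ unambiguously, set $f(x) = x^{\delta+2} - x^{\delta+1} - 6$; then $f(1) = -6 < 0$ and $f'(x) = x^{\delta}\bigl((\delta+2)x - (\delta+1)\bigr) > 0$ for $x \ge 1$, so $f$ has a unique root in $(1,\infty)$, and $\lambda$ equals that root. Plugging $\lambda^{\kappa_0} = \lambda^n$ into the Measure and Conquer estimate then yields the claimed $\OStar{\lambda^n}$ bound.

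The remaining items I would fill in for completeness are routine: that $\deg(v) \ge \delta$ at every recursive call because the algorithm never deletes vertices, so the branching vectors genuinely have the stated shape; that the $w_2$ solving the balance lies in $(\tfrac12, 1)$, consistent with the discussion preceding Lemma~\ref{lem:worstCaseBranchingLemma} and with $a,b \ge 1$; and the reductions handling $\delta = 1$ (trivial) and $\delta = 2$. The only genuinely delicate steps are (i) that the extremal branching vectors are exactly $\tau_1$ and $\tau_2$, and (ii) that equalising them is optimal for $\max(\tau_1,\tau_2)$ — but (i) is Lemma~\ref{lem:worstCaseBranchingLemma} and (ii) is the monotonicity argument already given, so the theorem reduces to the elimination above, which I expect to be the cleanest part.
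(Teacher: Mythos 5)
Your proposal is correct and follows essentially the same route as the paper: Measure and Conquer with weights $w_2, w_3=1$, the two extremal branching vectors $\tau_1,\tau_2$ from Lemma~\ref{lem:worstCaseBranchingLemma}, and elimination of $w_2$ to obtain $x^{\delta+2}-x^{\delta+1}=6$. Your elimination via $a+b=\delta+2$ and multiplying $\lambda^a=3$ with $\lambda^b=2\lambda/(\lambda-1)$ is a cleaner way to do the same algebra the paper does by solving each recurrence for $w_2$ and equating, and your added remarks (uniqueness of the root, polynomial-time leaves via Lemma~\ref{lem:decreaseValuesBE}) only tighten what the paper leaves implicit.
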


\begin{proof}
	We know that the branching factor of $\tau_1$ is given by the positive root of the equation:
	\[\frac{1}{x^{1 + \delta(1 - w_2)}} + \frac{1}{x^{1 + \delta(1 - w_2)}} + \frac{1}{x^{1 + \delta(1 - w_2)}} = 1\]
	This simplifies to:
	\[x^{1+\delta(1-w_2)} = 3\]
	By taking the logarithm and rewriting this equation further, we retrieve:
	\[w_2 = 1 + \frac{1}{\delta} \Big(1-\frac{\log{3}}{\log{x}}\Big)\]
	We apply the same kind of rewriting with the equation for $\tau_2$ to retrieve:
	\[w_2 = \frac{\log{\frac{-2}{1-x}}}{\log{(x)} \delta}\]
	Since we want $\tau_1$ to have the same value as $\tau_2$, and because this must occur when both expressions yield the same value for $w_2$, we equate the two expressions:
	\[1 + \frac{1}{\delta} \Big(1-\frac{\log{3}}{\log{x}}\Big) = \frac{\log{\frac{-2}{1-x}}}{\log{(x)} \delta}\]
	Multiplying both sides with $\log{(x)}\delta$ and simplifying, leads to the following equation:
	\[x^{\delta + 2} - x^{\delta + 1} = 6\]
	Therefore, the positive root of this equation gives the branching factor $x$ such that $x = \tau_1 = \tau_2$, using the same $w_2$ value for $\tau_1$ and $\tau_2$.
\end{proof}

	\section{\problemfont{3-Colouring} for graphs with minimum degree $c > 1$}\label{sec:3ColMinDegC}
The current fastest exact algorithm for \problemfont{3-Colouring} on general graphs, recently proposed by Meijer, has a running time of $\OStar{1.3217^n}$ \cite{3ColoringInTime1.3217^n}.
For graphs with minimum degree $\delta \geq 3$ an algorithm based on dominating sets is presented that runs in $\mathcal{O^*}\Big[ \Big(3^{\tfrac{1+\ln(\delta + 1)}{1 + \delta}} \Big)^n \Big]$ \cite{3ColoringWithMinDegC}.

An interesting observation about the algorithm and time analysis of Section~\ref{sec:exactAlgMeasureConquer}, is that it also applies to the \problemfont{3-Colouring} problem. This is because the algorithm branches are based on the three possible labels of a vertex, which aligns with the 3-colouring constraint where each vertex must be assigned one of the three colours. 

Moreover, once a vertex is assigned a color, its neighbours can no longer be assigned that same color. Thus, if we branch on a vertex $v$ that currently has three possible colours, then each neighbour $u \in N(v)$ that also has three possible colours, will be reduced to two. This behaviour is the same as by the stable cutset algorithm of Section~\ref{sec:exactAlgMeasureConquer}. Similarly, for neighbours $u \in N(v)$ with only two possible colours remaining, their amount of possible colours will be reduced in only two of the three branches. Again, the same behaviour as by the stable cutset algorithm. So, we conclude,

\begin{theorem}
	\problemfont{3-Colouring} can be solved in time $\OStar{\lambda^n}$ for a graph with minimum degree $\delta \geq 3$, where $\lambda$ is the positive root of $x^{\delta + 2} - x^{\delta + 1} - 6$.
\end{theorem}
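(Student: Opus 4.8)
The plan is to show that the branching algorithm of Section~\ref{sec:exactAlgMeasureConquer}, together with its Measure and Conquer analysis, transfers essentially verbatim to \problemfont{3-Colouring}. I would first set up the annotated-graph model for \problemfont{3-Colouring}: each vertex $v$ carries a set $p(v) \subseteq \set{1,2,3}$ of still-admissible colours, initialized to the full set; the only propagation rule is that if $p(u) = \set{i}$ for an edge $(u,v)$, then $i \notin p(v)$. This is the direct analogue of Lemma~\ref{lem:annGraphRules}, except that here \emph{all three} colour-deletion rules have the same form (a singleton forbids that one colour on each neighbour), which is in fact slightly simpler than the stable cutset case. A vertex with $|p(v)| = 0$ means the branch is infeasible; when every vertex has $|p(v)| \leq 2$, the remaining instance is a $(3,2)$-CSP (create a variable per vertex with domain $p(v)$, and for each edge add the constraints forbidding equal colours), which by Lemma~\ref{lem:decreaseValuesBE} can be reduced to have no variables and solved in polynomial time by the Beigel--Eppstein solver.

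Next I would state the branching rule: pick any vertex $v$ with $|p(v)| = 3$, and branch into three subcases fixing $p(v) = \set{1}$, $\set{2}$, $\set{3}$ respectively, recursing after propagation. The correctness argument is the standard exhaustive-branching one: in any proper $3$-colouring, $v$ receives one of the three colours, so at least one branch is consistent with it; conversely any leaf that the $(3,2)$-CSP solver accepts yields a proper colouring. The termination is clear since each branch strictly fixes one more vertex.

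The core of the proof is the running-time bound, and here I would argue that the branching-factor analysis is \emph{identical} to that already carried out for \scs. The key point, already observed informally in the paragraph preceding the theorem, is the combinatorial behaviour of the neighbours of $v$ under branching: a neighbour $u \in N(v)$ with $|p(u)| = 3$ has its measure drop by $w_3 - w_2$ in \emph{every} branch (since whatever colour $v$ takes is removed from $p(u)$), while a neighbour with $|p(u)| = 2$ has its measure drop by $w_2$ in \emph{exactly two of the three} branches (it loses a colour only in the two branches where $v$ is given a colour still in $p(u)$, and in the worst case one of the two colours in $p(u)$ coincides with the branch colour). Since $\deg(v) \geq \delta$, this produces exactly the branching vector $\tau_{w_3,t}$ displayed before Lemma~\ref{lem:worstCaseBranchingLemma}, with the same reduction $w_3 = 1$ and the same worst-case analysis ($t = 0$ or $t = \delta$ by Lemma~\ref{lem:worstCaseBranchingLemma}), leading to the two branching factors $\tau_1$ and $\tau_2$ and the equation $x^{\delta+2} - x^{\delta+1} = 6$. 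Hence $\lambda$, the positive root of $x^{\delta+2} - x^{\delta+1} - 6$, bounds the search tree size, and since the work at each leaf is polynomial the total running time is $\OStar{\lambda^n}$.

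The main obstacle — and the only place where I would be careful rather than just citing Section~\ref{sec:exactAlgMeasureConquer} — is verifying that the $2$-out-of-$3$ behaviour for degree-two-neighbours is genuinely the worst case for \problemfont{3-Colouring} and is not somehow better (or worse) than for \scs. For \scs a vertex with possible labels $\set{A,S}$ loses a label only in the branches $p(v) = \set{A}$ (kills $A$) or $p(v) = \set{S}$ (kills $S$), i.e.\ two of three; for \problemfont{3-Colouring} a vertex with $p(u) = \set{i,j}$ loses a colour only in the branches $p(v) = \set{i}$ or $p(v) = \set{j}$, again two of three — so the counts match exactly, and no new worst case arises. Once that correspondence is pinned down, the rest of the argument is a literal reuse of the \scs measure, weights, and optimization, so the theorem follows.
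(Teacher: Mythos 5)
Your proposal is correct and follows essentially the same route as the paper: it reuses the branching algorithm and Measure and Conquer analysis of Section~\ref{sec:exactAlgMeasureConquer}, observing that a neighbour with three admissible colours loses one in every branch while a neighbour with two admissible colours loses one in exactly two of the three branches, so the branching vectors, the choice of weights, and the equation $x^{\delta+2}-x^{\delta+1}=6$ carry over unchanged. (One cosmetic slip: for \scs the branches that shrink a neighbour annotated $\set{A,S}$ are $p(v)=\set{B}$ and $p(v)=\set{S}$, not $p(v)=\set{A}$ and $p(v)=\set{S}$, since assigning $A$ to $v$ forbids $B$ on its neighbours; the two-out-of-three count you rely on is nevertheless correct.)
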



We conjecture that for
all values of the minimum degree $\delta>0$,
the time complexity we proved for our algorithm is smaller than the proved
time complexity for the
dominating sets-based algorithm from \cite{3ColoringWithMinDegC}, but we were unable to 
prove this analytically.

We have computed these
complexities for a number of
values of $\delta$, including those given in
Table \ref{table:timeComp3Col} 
and all positive $\delta\leq 100$.
In each of these cases, $
\lambda<\mu$, i.e., in these cases, our algorithm outperforms the algorithm from \cite{3ColoringWithMinDegC}.

It is also interesting to note that when the minimum degree $\delta\geq 10$, then our algorithm has a smaller time complexity than the 
algorithm from Meijer~\cite{3ColoringInTime1.3217^n}.


\begin{table}[ht]
	\begin{tabular}{cccccccccc}
		\hline
		\text{$\delta$}  & 3      & 15      & 25     & 42     & 50     & 75     & 100    & 642    & 8703     \\
		\text{$\mu$}     & 1.9259 & 1.2957	 & 1.1971 & 1.1294 & 1.1121 & 1.0801 & 1.0630 & 1.0128 & 1.0013   \\
		\text{$\lambda$} & 1.7069 & 1.2271  & 1.1519 & 1.1000 & 1.0866 & 1.0620 & 1.0488 & 1.0100 & 1.0010   \\
		\hline
	\end{tabular}
	\caption{$\mu$ is defined as $3^{\tfrac{1+\ln(\delta + 1)}{1 + \delta}}$ \cite{3ColoringWithMinDegC}, and $\lambda$ is the positive root of $x^{\delta + 2} - x^{\delta + 1} = 6$ for a few different minimum degrees $\delta$. Note that $\lambda < \mu$.}
	\label{table:timeComp3Col}
\end{table}

	\section{Conclusion}\label{sec:conclusion}
In this paper, we presented a new exact algorithm for \scs. The algorithm begins by partitioning the graph into disjoint vertex sets. Branching is then applied to these sets, as each is guaranteed to contain either a triangle or closed neighbourhood configuration. By branching, the amount of possible labels for the vertices involved in the configuration are reduced to less than or equal to two. As a result, the variables corresponding to these vertices can be deleted when the instance is transformed to a (3,2)-CSP instance. The remaining vertices in the vertex sets, which are not part of any branching configuration, will be solved by the (3,2)-CSP solver. Therefore, any improvement to the \problemfont{(3,2)-Constraint Satisfaction Problem} will directly enhance the performance of our algorithm.

We further examined the \scs problem in graphs with minimum degree $\delta = c \cdot n$, where $c < 1$. First, we established an upper bound for $c$ beyond which a stable cutset does not exist. By constructing a graph with maximum degree that still admits a stable cutset, we showed that no stable cutset exists when $c > \tfrac{2}{3}$. We then developed a polynomial-time algorithm that solves \scs on graphs with $\delta \geq \tfrac{1}{2}n$. The algorithm annotates two non-adjacent vertices $(u, v) \in V \times V$ with $\set{A}$ and $\set{B}$, after which the remaining vertices can be annotated with two or less possible labels. 

For graphs with minimum degree $\delta = \tfrac{1}{2}n - k$, we proposed a kernelisation algorithm that is structurally similar to the polynomial-time algorithm for $\delta \geq \tfrac{1}{2}n$. The algorithm either solves the instance in polynomial time or produces a (3,2)-CSP instance with at most $4k-3$ variables.

The final variant we studied is \scs in graphs with minimum degree $c$, where $c > 1$. We provided a $\mathcal{NP}$-completeness proof for this problem and introduced an exact algorithm that is analysed with the Measure and Conquer technique. The algorithm branches on vertices that have three possible labels. By labelling the vertex that is branched on with one of the possible labels, the possible labels of the neighbouring vertices can be reduced. Once all the vertices have two or less possible labels annotated, the instance is converted into a (3,2)-CSP instance, which can then be solved in polynomial time. By using Measure and Conquer, a time complexity of $\OStar{\lambda^n}$ is achieved, where $\lambda$ is the positive root of $x^{\delta + 2} - x^{\delta + 1} - 6$. 

Finally, we demonstrated that this exact algorithm for \scs in graphs with minimum degree $c$, where $c \geq 3$, can also be applied to solve \problemfont{3-Colouring} in graphs with minimum degree $c$. As a result, the best known algorithm for \problemfont{3-Colouring} with minimum degree is improved to a running time of $\OStar{\lambda^n}$, matching the bound obtained for \scs  in graphs with minimum degree $c$.

\bibliographystyle{abbrvurl}
\bibliography{references}
\end{document}